\theoremstyle{plain}
\newtheorem{theorem}{Theorem}[section]
\newtheorem{lemma}[theorem]{Lemma}
\theoremstyle{definition}
\newtheorem{definition}[theorem]{Definition}
\newtheorem{assumption}[theorem]{Assumption}
\theoremstyle{remark}
\newenvironment{hproof}{%
  \proof}{\endproof}
\title{No Bidding, No Regret: Pairwise-Feedback Mechanisms for Digital Goods and Data Auctions}
\author{%
  Zachary Robertson \\
  Department of Computer Science\\
  Stanford\\
  \texttt{zroberts@stanford.edu} \\
  \And
  Oluwasanmi Koyejo \\
  Department of Computer Science\\
  Stanford\\
  \texttt{sanmi@stanford.edu} \\
  % examples of more authors
  % \And
  % Coauthor \\
  % Affiliation \\
  % Address \\
  % \texttt{email} \\
  % \And
  % Coauthor \\
  % Affiliation \\
  % Address \\
  % \texttt{email} \\
}
\begin{document}

\maketitle

\begin{abstract}
    The growing demand for data and AI-generated digital goods, such as personalized written content and artwork, necessitates effective pricing and feedback mechanisms that account for uncertain utility and costly production. Motivated by these developments, this study presents a novel mechanism design addressing a general repeated-auction setting where the utility derived from a sold good is revealed post-sale. The mechanism's novelty lies in using pairwise comparisons for eliciting information from the bidder, arguably easier for humans than assigning a numerical value. Our mechanism chooses allocations using an epsilon-greedy strategy and relies on pairwise comparisons between realized utility from allocated goods and an arbitrary value, avoiding the learning-to-bid problem explored in previous work. We prove this mechanism to be asymptotically truthful, individually rational, and welfare and revenue maximizing. The mechanism's relevance is broad, applying to any setting with made-to-order goods of variable quality. Experimental results on multi-label toxicity annotation data, an example of negative utilities, highlight how our proposed mechanism could enhance social welfare in data auctions. Overall, our focus on human factors contributes to the development of more human-aware and efficient mechanism design.
\end{abstract}

\section{Introduction}

Marketplaces generating digital goods, such as personalized written content and artwork based on user requests, have garnered significant attention in recent years due to their ability to scale and adapt to user preferences \citep{Morgan_2023, Paul_Dang_2023}. Such generative marketplaces possess immense potential to revolutionize the economy through applications such as online advertising \citep{Paul_Dang_2023}, a market where spending is expected to exceed
\$700 billion in 2023 \citep{Statista_2023}. However, they face challenges in collecting accurate and timely human feedback, as well as in managing compute costs for the most advanced models, which the CEO of OpenAI has described as "eye-watering" \citep{Karpf_2023a}. This problem is particularly acute since the value each user derives from a fulfilled request is typically only known after allocation.

In this paper, we examine the general repeated-auction setting where the utility derived from sold digital goods is revealed to bidders post-sale. In this setting, digital goods are "made-to-order" based on user requests. A key challenge is that users could provide inaccurate or misleading feedback, which would harm revenue generation. To address these challenges, we propose an auction mechanism that is robust to strategic reporting on the user side and no-regret in revenue on the market side. Our pricing mechanism is based on a pairwise comparison model that asks the user to report if the value of their allocation is above an arbitrarily selected reference point, avoiding the learning-to-bid problem that has been a point of concern in previous works \citep{feng2018learning, guo2022no}.

Our main contribution is a novel auction mechanism for selling digital goods that are costly to produce and whose utility to a particular user is uncertain. We outline our main contributions below:

\begin{enumerate}
\item \textbf{Feedback-based auction mechanisms:} This study introduces a feedback-driven, contextual, asymptotically truthful mechanism that eliminates the need for users to know their value for generating a digital good beforehand. By allocating goods to agents and subsequently collecting feedback on their satisfaction, the mechanism effectively sidesteps the learning-to-bid problem.

\item \textbf{Analysis of efficiency:} An in-depth analysis of the proposed mechanism is presented. We establish finite-time regret bounds for truthful reporting, participation, and welfare-revenue generation against the standard second-price auction. We also show the underlying expected utilities can be identified from pairwise comparisons without relying on distributional assumptions. 

\item \textbf{Welfare maximizing data acquisition:} We also explore how to use our mechanism as a payment rule for toxicity annotation, a setting with negative utilities that are only realized after the mechanism purchases a label from a user. We discuss this setting further in Section \ref{toxicity_setting}. 

\end{enumerate}

We tackle two main technical challenges. First, akin to \citep{nazerzadeh2013dynamic}, we utilize a learning algorithm for the expected utility function but diverge from their history-dependent rule, which requires $O(H^2)$ calls to the model for $H$ allocation rounds to enhance computational efficiency. Our Lemma \ref{strategicutility} outlines that providing inaccurate or misleading feedback isn't particularly profitable, achieved through a refined strategic reporting and regret analysis. Secondly, our mechanism features a simplified reporting rule, negating the need for agents to accurately learn to bid, as seen in works like \citep{feng2018learning, guo2022no}. Although common among all prior work we are aware of, we also dismiss the unrealistic assumption of precise value reporting as it essentially requires reporting a real number with infinite precision. In Theorem \ref{pairMech}, we establish a mechanism that uses feedback reports free from distributional assumptions on underlying utilities.

The remainder of this paper is organized as follows. In Section \ref{related_work}, we review related work in auctions with incomplete information, mechanism design, and data pricing. In Section \ref{background}, we formalize our setting and describe the background for our approach. In Section \ref{model} we introduce our proposed mechanism, and in Section \ref{analysis} we present our main results for the proposed mechanism. 

\begin{figure*}[!t]
    \centering
    ~ 
    \centering
    \includegraphics[width=0.8\linewidth]{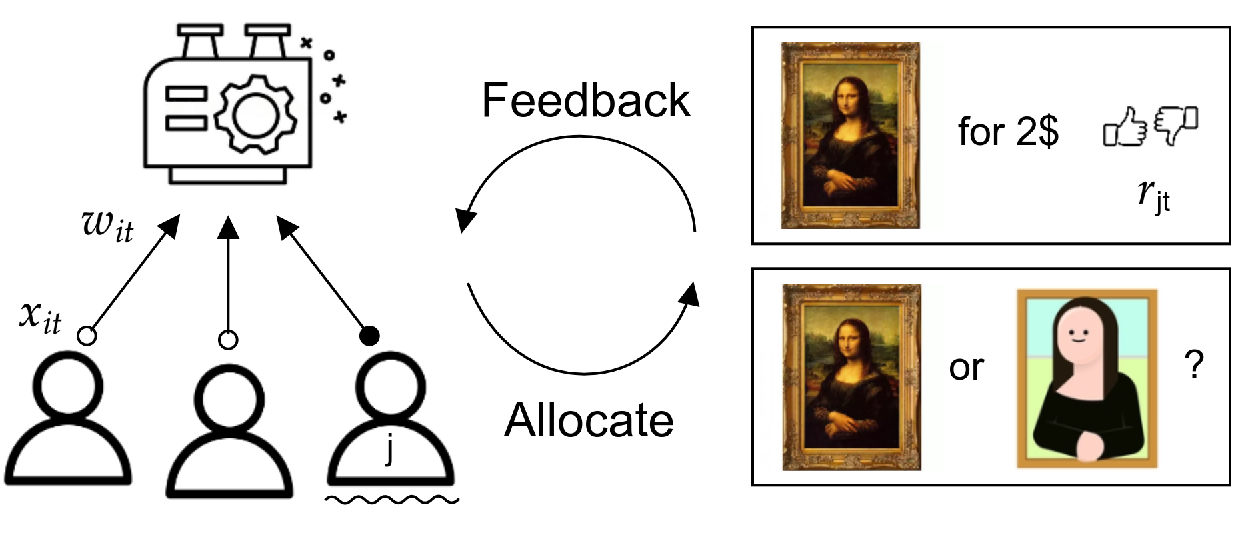}
    
    \caption{This is an illustration of our proposed mechanism introduced in Section \ref{model}. In the left panel we start period $t$, each agent $i \in [n]$ submits a request $w_{it}$ (prompt) for a made-to-order digital good, and then the mechanism determines a boolean-valued allocation assignment $x_{it}$ for each agent. If the $j^{\text{th}}$ agent receives an allocation for their request ("Mona Lisa") then the agent receives a digital good made according to their request. After a digital good is produced, the agent self-reports a Boolean value, denoted by $r_{jt}$, indicating if the value of their allocation is above an arbitrarily selected reference point. Alternatively, the mechanism produces multiple digital goods and has the user rank them. A priori, it is not obvious how to design a mechanism robust to strategic manipulation. This paper devises a simple mechanism based on a second-price auction that is asymptotically truthful.}
    \label{fig:Motivation}
\end{figure*}

\section{Related work}
\label{related_work}

Our work builds on research in the fields of auctions with incomplete information, machine learning for mechanism design, and data pricing. 

\subsection{Mechanism design and machine learning}

Our work intertwines mechanism design and machine learning to address challenges in pricing and feedback systems for digital goods and data \citep{balcan2005mechanism, devanur2009price, babaioff2009characterizing, babaioff2015truthful}. We also draw from the intersection of machine learning and mechanism design for allocating digital goods \citep{immorlica2005click, agarwal2009skewed, mahdian2008pay, nazerzadeh2013dynamic}.

The proposed mechanism is distinguished by two key differences from existing works that also investigate online auction design \citep{devanur2009price, nazerzadeh2013dynamic, babaioff2009characterizing}. For exact truthfulness, strict characterizations on attainable regret rates have been established for deterministic payment rules \citep{devanur2009price, babaioff2009characterizing}. Weaker asymptotic truthfulness has also been considered under the pay-per-action framework known in online advertising \citep{immorlica2005click,mahdian2008pay,nazerzadeh2013dynamic}. Most related to our work is \citep{nazerzadeh2013dynamic}, which studies the pay-per-action setting, which allows reporting value after allocation, and proposes a history-dependent pricing rule. Our work builds on this progress by studying contextual auctions, removing the history-dependent pricing rule, and eliminating the requirement for an exact value offering a human-aware perspective and a more computationally-efficient mechanism. See Table \ref{tab:comparison} for a comparison between these approaches and our proposal. 

\subsection{Partially-informed auctions}

Auction design with incomplete information has been a topic of interest in recent years \citep{bergemann2007information, feng2018learning, epasto2021clustering, guo2022no}. In particular, \citep{bergemann2007information} considers single-item multi-bidder auctions where information is only partially revealed to bidders. \citep{feng2018learning} investigates the single-item setting where bidders learn to bid with partial feedback and obtain no-regret against the best fixed bid in hindsight. \citep{guo2022no} extend this analysis by considering context and propose no-regret algorithms that are efficient from the buyer's perspective with applications to privacy. However, all of this prior work requires that the agent learn-to-bid, which requires additional effort and is commonly understood to lower an agent's welfare \citep{cai2015optimum}. Our work takes a different approach to study partially-informed auctions by focusing on user compatibility to provide their value after allocation while still maintaining the connection to these privacy considerations \citep{epasto2021clustering}. See Table \ref{tab:comparison} for a comparison between \citep{guo2022no} and our proposal. 

\subsection{Welfare and truthful elicitation}

In the realm of data pricing and acquisition, we draw from research on learning-based data pricing \citep{chen2023learning, zhao2021right, karimireddy2022mechanisms}, peer-prediction mechanisms \citep{prelec2004bayesian, witkowski2012robust, cai2015optimum}. In particular, \citep{cai2015optimum} develops a model for constructing statistical estimators in the presence of costly information revelation, while \citep{prelec2004bayesian} and \citep{witkowski2012robust} propose peer-prediction techniques for eliciting truthful information without the need for ground-truth data. These works have made significant contributions to understanding optimal mechanisms but are more concerned with obtaining truthful responses rather than the socially efficient allocation of goods. 

In recent years, the question of worker welfare during data acquisition has become a central issue. In particular, recent work on OpenAI's toxicity filter had to be halted because data annotation had a traumatic effect on workers \citep{perrigo2023exclusive}. Toxic data annotation, in general, is known to have traumatic effects on workers \citep{burns2008emotional, arsht2018human, steiger2021psychological, perrigo2023exclusive}. In particular, \citep{steiger2021psychological} proposes preventing or reducing exposure as a potential technological intervention strategy. One of our contributions is to formalize this problem as an instance of a reverse auction (negative utilities) in our setting and demonstrate theoretically and empirically that we can asymptotically maximize the social welfare of workers. 

\begin{table}[ht]
    \centering
    \smallskip
    \caption{Comparison of our mechanism with prior work}
    \smallskip
    \label{tab:comparison}
    \begin{tabular}{c c c c c c}
      \hline
      \textbf{Auction} & \textbf{Strategy}  & \textbf{Feedback} & \textbf{Local} & \textbf{Efficient} \\
       \textbf{Mechanism} & \textbf{Robust} & \textbf{Reporting} & \textbf{Payments} & \textbf{User-Strategy} \\
      \hline
      \citep{devanur2009price} & \checkmark & $\times$ & \checkmark &  $\times$ \\
      \hline
       \citep{nazerzadeh2013dynamic} & \checkmark  & \checkmark & $\times$ & $\times$ \\
      \hline
      \citep{guo2022no} & \checkmark & $\times$ & \checkmark & $\times$ \\
      \hline
      \textbf{This Work} & \checkmark & \checkmark & \checkmark & \checkmark \\
      \hline
    \end{tabular}
    \smallskip
\end{table}

\section{Preliminaries}
\label{background}

In this section, we overview the problem setting under consideration and introduce our key definitions. Discussion of how to learn a payment rule is discussed in Section \ref{model}. As an example (Figure \ref{fig:Motivation}), agents could be competing for resources to generate artwork, where the space of prompts is $\mathcal{W}$, the space of digital goods is $\mathcal{O}$, and the mechanism determines the price and allocation of resources. We summarize our notation in Appendix \ref{notationSum} provided in the supplementary materials. 

\textbf{Problem Setting:} We consider a scenario where a set of $n$ agents compete for allocations across discrete periods $t = 1, 2, \ldots, H$, up to a horizon $H$. A mechanism $\mathcal{M}$ oversees pricing for allocations. At each period $t$ the following happens:

\begin{enumerate}
    \item Each agent $i \in [n]$ submits a request $w_{it} \in \mathcal{W}$ sampled independently from one another, and a Boolean-valued array of allocations $x_{it}$ is generated for the agents.
    \item If the $j^{\text{th}}$ agent receives an allocation, then the agent receives a digital good $o_{jt} \in \mathcal{O}$ sampled some distribution conditioned on $w_{jt}$ which is sold to the agent.
    \item The non-negative value of the agent who receives an allocation during period $t$ is denoted by a bounded random variable $u_{jt}: \mathcal{W} \times \mathcal{O} \to [0,1]$.
    \item The agent pays an amount $p_{jt}$ determined by the mechanism and then reports a Boolean variable $r_{jt}(c)$ indicating if $u_{jt}$ is above some value $c$ which is to be chosen randomly. 
\end{enumerate}

We'll emphasize that $u_{it}$ has randomness from the requests $w_{it}$ and the mechanism that generates outputs $o_{it}$. Since agent's know their requests, we'll use $u_{it}(w)$ to denote the utility random variable given a request $w$. Our main assumption is an independence assumption on the requests.

\begin{assumption}\label{assumption:independent}
    For each agent $i \in [n]$, their request and output sequences $w_{it}$ and $o_{it}$ are independent of other agents and allocations.
\end{assumption}

Assumption~\ref{assumption:independent} allows sequential generation, such as written content, but precludes collusion among agents or exploiting specific prompts for high utility. We also introduce notation for an ideal setting where we have perfect knowledge of the agents' expected values. 

\begin{equation}
\label{allocation}
    \mu_{i}(w_{it}) := \mathbb{E}[u_{it}(w_{it}) | \lbrace (w_{ik}, u_{ik} )\rbrace_{k < t}], \ x_{it} := \mathbb{I}(\mu_{i}(w_{it}) \ge \mu_{j}(w_{jt}) \ \forall j)
\end{equation}

This defines the agent's expected utility and allocation. We will design $\mathcal{M}$ (Section \ref{model}) so that agents have no-regret for participating and truthful reporting along with comparable revenue and social welfare to a standard auction format. To introduce the key definitions, we just need to know $\mathcal{M}$ will determine allocations $\hat x_{it}$ and prices $p_{it}$ using value estimates $\hat \mu_{it} \sim \mu_{i}$ from the reports.

\begin{definition}
The $i^{\text{th}}$ agent is considered truthful if $r_{it}(c) = \mathbb{I}(u_{it} \ge c)$ for all $t$ and $c \in [0,1]$. 
\end{definition} 

This means they respond to the queries in Fig \ref{fig:Motivation} accurately. In our setting, there are numerous reporting strategies $r_{it} : (\mathcal{W} \times \mathcal{O}) \times [0,1] \to \lbrace 0 , 1 \rbrace$ the agents could use. Our focus lies in designing a mechanism such that agent incentives are aligned with truth-telling. Another important criterion for each agent is that they have no-regret for participation.

\begin{definition} \label{asympIR} $\mathcal{M}$ is asymptotically ex ante individually rational if each agent $i \in [n]$ has no-regret for participation when they are truthful. Specifically, the long-term total utility of the agent is nonnegative:
\[
\liminf_{H \to \infty} \mathbb{E} \left[ \sum_{t = 1}^H \hat x_{it} \mu_{i} - p_{it} \right] \ge 0
\]
\end{definition}

While definition \ref{asympIR} captures the rationality of each truthful agent's participation, it does not guarantee no-regret against other reporting strategies.

\begin{definition} Let $U_i(H)$ be the expected total utility of the $i^{\text{th}}$ agent using a truthful reporting strategy and $\hat U_i(H)$ be the maximum expected profit whenever all other agents are truthful. We say that $\mathcal{M}$ is asymptotically truthful if truthful reporting is no-regret against strategic reporting:
\[
\hat U_i(H) - U_i(H) = o(H)
\]
\end{definition}

This definition is similar to previous definitions in that it ensures that deviating from truthful reporting is relatively unprofitable \citep{pavan2009dynamic, nazerzadeh2013dynamic}. The main distinction is that it is regret-based which enables us to obtain rates in our analysis. While an asymptotic definition seems limiting, strong notions, such as dominant strategy incentive compatibility, are achievable only in limited settings \citep{pavan2009dynamic, kakade2013optimal}.

We also desire $\mathcal{M}$ to have no-regret against an idealized auction. We compare the welfare and revenue of our mechanism to a baseline given by the second-price auction known to be welfare and revenue maximizing \citep{myerson1981optimal}. In this format, allocations go to the highest bidder, say, the $i^{\text{th}}$ agent who will pay $\gamma_t = \max_{j \not = i} \mu_{j}(w_{jt})$ to $\mathcal{M}$. Otherwise, they pay nothing.

\begin{definition} We say $\mathcal{M}$ is asymptotically ex-ante welfare maximizing if it has no-regret against the welfare generated by a second-price auction:
\label{welfare_def}
\[
\mathbb{E} \left[ \sum_{t = 1}^H \sum_{i = 1}^n \hat x_{it} \mu_{i}(w_{it}) \right] - \mathbb{E} \left[ \sum_{t = 1}^H \max_i(\mu_{i}(w_{it})) \right] = o(H)
\]
\end{definition}

\begin{definition} We say $\mathcal{M}$ is asymptotically equivalent to the revenue of the second-price auction if it has no-regret against the revenue generated by a second-price auction:
\[
\mathbb{E} \left[ \sum_{t = 1}^H \sum_{i = 1}^n \hat x_{it} p_{it} \right] - \mathbb{E} \left[\sum_{t = 1}^H \gamma_t \right] = o(H)
\]
\end{definition}

\section{The proposed mechanism}
\label{model}

\begin{table}[ht]
\vspace{0pt}
\centering
\begin{algorithm}[H]
        \caption{Feedback-Driven Mechanism}
        \label{alg:Mechanism}
        \SetKwInOut{Input}{Input}
        \SetKwInOut{Output}{Output}
        \Input{Exploration rate $\eta_t$, agent submissions $w_{it}$}
        \Output{Tuple of context-report pairs}
        \For{$t = 1, 2, \cdots$}
        {
          \eIf{$\text{explore with probability } \eta_t$}
          {
            $i = \text{sample}([1, ... , n])$ \;
            $x_{it} = 1$ \;
            $c_{it} = \text{sample}([0,1])$ \;
            $p_{it} = 0$ \;
            $r_{it}(c_{it}) = \text{agent-report}(w_{it}, c_{it})$ \;
          }
          {
            $i = \text{argmax}_{j} \hat \mu_{jt}(w_{jt})$ \;
            $x_{it}, y_{it} = 1$ \;
            $p_{it}, c_{it} = \max_{j \not = i} \hat \mu_{jt}(w_{jt})$ \;
            $r_{it}(c_{it}) = \text{agent-report}(w_{it}, c_{it})$ \;
          }
        }
\end{algorithm}
\end{table}

In our approach, as illustrated in Figure \ref{fig:Motivation} and implemented in Algorithm \ref{alg:Mechanism}, we aim to estimate the utility function of each agent using a learning algorithm $\mathcal{L}$, connecting with the high-level goals of the paper by designing an auction mechanism that improves welfare in digital goods and data auctions. Ideally, we would know $\mu_{i}$ for each agent $i \in [n]$ and allocate using a second-price auction. The challenge lies in the potential misreporting of observed utilities by agents to gain utility.

The basic mechanism is a second-price payment rule estimated with a learning algorithm $\mathcal{L}$. We fit $\hat \mu_{it}$ using $\mathcal{L}$ and a data set of context and reporting tuples $\lbrace (w_{ik}, r_{ik}, c_{ik} )\rbrace_{k \in S_{it}}$, where $c_{ik} = p_{ik}$ represents the price comparison, and $S_{it}$ denotes periods of allocation to the $i^{\text{th}}$ agent up to time $t$. Simultaneously, the proposed mechanism performs a variant of $\epsilon$-greedy allocation, allocating to agents with the highest estimated value during exploitation rounds indicated by $y_{it}$ or exploring by allocating for free to a randomly chosen agent with probability $\eta_t \in [0,1]$. During exploration rounds, agents still compare to a price point $c_{it}$ sampled from a distribution over $[0,1]$. Finally, the exploitation round payments are $\hat \gamma_t = \text{max}_{j \not = i} \hat \mu_{jt}(w_{jt})$, with $i$ indicating the allocated agent. In general, the payment $p_{it}$ equals $y_{it} \hat \gamma_t$.

To study allocation and payment, we define the best empirical estimate under $\mathcal{L}$ of an agent's expected utility and allocation using the data set $\lbrace (w_{ik}, c_{ik}, r_{ik})\rbrace_{k \in S_{it}}$. 

\begin{equation}
    \hat \mu_{it}(w_{it}) := \mathbb{E}_{\mathcal{L}}[u_{it} | \lbrace (w_{ik}, c_{ik}, r_{ik})\rbrace_{k \in S_{it}}], \ \hat x_{it} := \mathbb{I}(\hat \mu_{it}(w_{it}) > \hat \mu_{jt}(w_{jt}) \ \forall j)
\end{equation}

where $\mathbb{E}_{\mathcal{L}}$ is an estimate under $\mathcal{L}$ for the true expected value. Discussion of a concrete choice of a learning algorithm is delayed until Section \ref{pair}. It is worth remarking that these definitions differ from equation \ref{allocation} because the agent merely reports their relative utility against $c_{ik}$. 

It is worth making a few remarks comparing our mechanism (see Table \ref{tab:comparison}) to related works implementing online auctions with learned expected values \citep{devanur2009price, nazerzadeh2013dynamic, babaioff2015truthful, guo2022no}. Our design deviates significantly in key areas. We allow agents to self-report their satisfaction, circumventing the "learn-to-bid" assumption adopted by \citep{guo2022no}. We introduce a simplified reporting rule that directly links reports to current payments via binary feedback, which simplifies value reporting for agents, a notable contrast from all of these works. This strategy also contrasts with the computationally demanding history-dependent payment rule used by \citep{nazerzadeh2013dynamic}, which scales quadratically. This tailored approach marks our mechanism as both efficient and user-side oriented, as per our prior technical discussion.

\section{Analysis}
\label{analysis}

We develop sufficient conditions for the learning algorithm $\mathcal{L}$ applied in Algorithm \ref{alg:Mechanism} to estimate the $\mu_i$ that results in a mechanism $\mathcal{M}$ that meets our social efficiency criterion. We then explore how to implement the learning algorithm and examine a relevant data acquisition example. Our main condition involves the error from $\mathcal{L}$ applied to truthful reporting data from exploration rounds:

\begin{equation}
\Delta_t := \mathbb{E}[ \max_k |\mu_{k}(w_{kt}) - \hat \mu_{kt}(w_{kt} | x_{kt'} = 1, y_{kt'} = 0, 0 \le t' < t)|]
\end{equation}

As the number of exploitation allocations is dependent on the other agents' behavior, we focus on performance using just the randomly allocated exploration rounds. Our main result offers valuable insights into the performance of mechanism $\mathcal{M}$ using intuitive conditions on the learning algorithm $\mathcal{L}$.

\begin{restatable}[]{theorem}{mainresult}
\label{main_result}
Suppose our mechanism $\mathcal{M}$ estimates agent values bounded to the unit-interval using some learning algorithm $\mathcal{L}$. Suppose the expected error of this algorithm is monotone decreasing in the number of samples and that for all time,
\begin{equation}
\sum_{t = 1}^H \Delta_t = o \left( \sum_{t = 1}^H \mathbb{E}[\eta(t)] \right) = o(H)
\end{equation}
then the mechanism  $\mathcal{M}$ satisfies the following:

\begin{enumerate}
    \item Is asymptotically individually rational and asymptotically truthful
    \item Is asymptotically welfare and revenue maximizing. 
    \item Compared to a second-price auction $\mathcal{M}$ can obtain welfare and revenue regret $\tilde O(H^{2/3})$ if there is a learning algorithm $\mathcal{L}$ for valuations with slow learning rate $\tilde O(H^{-1/2})$ and regret $\tilde O(H^{1/2})$ if there is an algorithm with a fast rate $\tilde O(H^{-1})$.
\end{enumerate}
\end{restatable}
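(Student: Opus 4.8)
The plan is to separate the mechanism's behavior into its exploration rounds (free, random allocation with truthfully-incentivized comparisons) and its exploitation rounds (second-price allocation against the learned estimates), and to show that the exploration rounds supply enough clean samples to drive the estimation error $\Delta_t$ to zero, while the exploitation rounds inherit the incentive and efficiency properties of an exact second-price auction up to this error. The hypothesis $\sum_{t=1}^H \Delta_t = o(\sum_{t=1}^H \mathbb{E}[\eta(t)]) = o(H)$ is precisely what lets each per-round error term aggregate to a sublinear total, so the entire argument reduces to producing per-round bounds of the form $O(\Delta_t)$ on exploitation rounds and $O(\mathbb{E}[\eta(t)])$ on exploration rounds.

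First I would dispatch individual rationality, welfare, and revenue by comparison to the ideal second-price auction, all via the same two-source decomposition. For individual rationality, an exploration round contributes $\ge 0$ (zero payment, non-negative value), while a truthful winner's exploitation utility is $\mu_{i}(w_{it}) - \max_{j\neq i}\hat\mu_{jt}(w_{jt})$; since $\hat\mu_{it}\to\mu_{i}$, the winning event coincides up to error $O(\Delta_t)$ with $\mu_i$ exceeding the second-highest value, so the per-round utility is bounded below by $-O(\Delta_t)$, and summing gives the $\liminf \ge 0$ of Definition \ref{asympIR}. For welfare, the regret splits into a misallocation loss on exploration rounds (bounded per round by $\mathbb{E}[\eta(t)]$, since only then do we deviate from the greedy choice) and an estimation loss on exploitation rounds (bounded per round by $O(\Delta_t)$, because choosing the argmax of $\hat\mu$ rather than $\mu$ costs at most twice the uniform estimation error). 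The revenue gap is controlled identically, since the exploitation price $\hat\gamma_t$ tracks the true second price $\gamma_t$ within $O(\Delta_t)$. Both totals are $O(\sum_t \mathbb{E}[\eta(t)] + \sum_t \Delta_t) = o(H)$.

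For the rates in part (3), I would fix a constant exploration schedule, so that by round $t$ each agent has accumulated $\Theta(\eta t/n)$ truthful exploration samples. A slow learning rate $\tilde O(m^{-1/2})$ then gives $\Delta_t = \tilde O((n/(\eta t))^{1/2})$, hence $\sum_t \Delta_t = \tilde O((n/\eta)^{1/2}H^{1/2})$; balancing this against the exploration cost $\eta H$ gives $\eta \asymp (n/H)^{1/3}$ and total regret $\tilde O(H^{2/3})$. A fast rate $\tilde O(m^{-1})$ gives $\sum_t \Delta_t = \tilde O((n/\eta)\log H)$, and balancing yields $\eta \asymp (n\log H/H)^{1/2}$ and regret $\tilde O(H^{1/2})$. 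These are routine optimizations once the per-round bounds above are in hand.

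The hardest step will be asymptotic truthfulness, because an agent's binary reports do not affect the current round but instead perturb the learning algorithm's \emph{future} estimates $\hat\mu_{it'}$, coupling the entire trajectory. The approach I would take is to hold the other (truthful) agents fixed, so that the price process $\max_{j\neq i}\hat\mu_{jt}$ converges to the true second price, and then argue from the second-price structure that any manipulation is self-defeating: inflating one's estimate only wins additional rounds in which the true value lies below the price (negative marginal utility), while deflating it forfeits profitable rounds. Making this rigorous amounts to showing, and this is where Lemma \ref{strategicutility} enters, that the cumulative advantage of any reporting strategy over truthful reporting is bounded by the accumulated estimation error and the exploration budget, so that $\hat U_i(H) - U_i(H) = O(\sum_t \Delta_t + \sum_t \mathbb{E}[\eta(t)]) = o(H)$. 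The delicate part is controlling the indirect effect of agent $i$'s allocations on the sample sets $S_{jt}$ of the other agents, which I would absorb using Assumption \ref{assumption:independent} together with the fact that exploration allocations are chosen independently of reports.
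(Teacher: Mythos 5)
Your proposal matches the paper's proof in all essentials: the same exploration/exploitation decomposition, the same per-round bounds (overcharges and second-price tracking within $O(\Delta_t)$ on exploitation rounds, losses of at most $\mathbb{E}[\eta_t]$ on exploration rounds), and the same resolution of truthfulness via what the paper proves as Lemma \ref{strategicutility} --- a case analysis over $S_L \setminus S_T$, $S_T \setminus S_L$, and $S_T \cap S_L$ bounding any strategic advantage by $6\sum_t \Delta_t$, using exactly your observation that exploration allocations are strategy-independent and that the competitors' estimates rest on their own truthful exploration data. The one place you genuinely diverge is part (3): you fix a constant, horizon-tuned exploration rate $\eta \asymp (n/H)^{1/3}$ (resp.\ $\asymp (n\log H/H)^{1/2}$) and balance, whereas the paper uses an anytime decaying schedule $\eta_t = t^{-1/3}(n\log t)^{(1+2\epsilon)/3}$ (resp.\ $t^{-1/2}(n\log t)^{(1+\epsilon)/2}$) and runs an explicit multiplicative-Chernoff argument to ensure each agent actually accumulates $\Omega(t\eta_t/n)$ exploration samples --- a concentration step you dismiss as routine but which the paper carries out in full. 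Both tunings deliver the claimed $\tilde O(H^{2/3})$ and $\tilde O(H^{1/2})$ rates; the paper's schedule buys horizon-independence. One caveat in your version: balancing $\sum_t \Delta_t$ exactly against $\sum_t \eta_t$ makes the two sums the same order, which violates the hypothesis $\sum_t \Delta_t = o\left(\sum_t \mathbb{E}[\eta(t)]\right)$ that parts (1)--(2) rely on (the strategic-profit and overcharge bounds must be \emph{dominated} by the free-allocation profit, not merely comparable to it); you need to tilt slightly toward over-exploration, as the paper does with its $\epsilon$-shifted exponents, at a cost of only polylogarithmic factors in the regret.
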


The primary assumption is that the expected sum of errors for the learning algorithm decreases quickly relative to the sum of expected regret terms $\mathbb{E}[\eta(t)]$ over the entire horizon $H$. Although we can always increase the number of exploration rounds for a consistent algorithm to meet this condition, doing so may significantly reduce revenue. We discuss a concrete learning algorithm in Section \ref{pair}. 

The proof of this result relies on determining whether strategic reporting can be profitable for an agent, which we bound in terms of the error induced by the learning algorithm. We outline the main steps here and defer the proof to Appendix \ref{Main_Proof} in the supplementary materials. 

\begin{restatable}[]{lemma}{strategicutility}
\label{strategicutility}
A single agent providing misleading feedback can increase their expected utility up to time $H$ by no more than $6 \Sigma_{t \in [H]} \Delta_t$. 
\end{restatable}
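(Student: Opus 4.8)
The plan is to cast the argument as a second-price-style incentive analysis in which the agent's payment does not depend on its own reports. The key structural fact, supplied by Assumption~\ref{assumption:independent}, is that the exploitation price $\hat\gamma_t = \max_{j \neq i}\hat\mu_{jt}(w_{jt})$ and the competing estimates $\hat\mu_{jt}$ for $j \neq i$ are built only from the other agents' (truthful) data and are therefore statistically independent of agent $i$'s reporting strategy; likewise the coin flips determining which rounds are exploration versus exploitation are strategy-independent. First I would dispose of the exploration rounds: there the allocation is random and the price is $0$, so the agent's direct payoff in an exploration round is $\mu_i(w_{it})$ regardless of what it reports, and this contribution cancels in the difference $\hat U_i(H) - U_i(H)$. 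The only channel through which exploration reports matter is their effect on the fitted $\hat\mu_{it}$, and that effect is fully absorbed by the exploitation analysis below. The remainder of the proof therefore concerns exploitation rounds only.

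Next I would establish an oracle upper bound on any strategy. In an exploitation round, winning yields $\mu_i(w_{it}) - \hat\gamma_t$ and losing yields $0$, so whatever shape the agent's reports force on $\hat\mu_{it}$, its per-round gain is at most $(\mu_i(w_{it}) - \hat\gamma_t)^{+}$ pointwise. Summing and using that $(\cdot)^{+}$ is $1$-Lipschitz together with $|\hat\gamma_t - \gamma_t| \le \max_k |\hat\mu_{kt}(w_{kt}) - \mu_k(w_{kt})|$ gives
\[
\hat U_i(H) \le U_i^{*} + \sum_{t=1}^{H}\Delta_t, \qquad U_i^{*} := \sum_{t=1}^{H}\mathbb{E}\big[(\mu_i(w_{it}) - \gamma_t)^{+}\big],
\]
where $U_i^{*}$ is the ideal second-price payoff and the error term is controlled because the all-data estimate is at least as accurate as the exploration-only estimate defining $\Delta_t$, by the monotonicity-in-samples hypothesis.

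I would then lower-bound the truthful payoff. Under truthful reporting $\hat x_{it} = \mathbb{I}(\hat\mu_{it}(w_{it}) > \hat\gamma_t)$, so the truthful exploitation payoff is $\sum_t \mathbb{E}[\mathbb{I}(\hat\mu_{it} > \hat\gamma_t)(\mu_i - \hat\gamma_t)]$. Comparing this to $U_i^{*}$ by a triangle-inequality decomposition, replacing $\gamma_t$ by $\hat\gamma_t$ in the payoff costs one factor of $\sum_t\Delta_t$, while the mismatch between the indicators $\mathbb{I}(\mu_i > \gamma_t)$ and $\mathbb{I}(\hat\mu_{it} > \hat\gamma_t)$ is nonzero only on the event that $\mu_i - \gamma_t$ lies within $2\max_k|\hat\mu_{kt} - \mu_k|$ of zero; on that thin boundary band $|\mu_i - \hat\gamma_t|$ is itself at most $3\max_k|\hat\mu_{kt}-\mu_k|$, contributing at most $3\sum_t\Delta_t$ in expectation. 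Hence the truthful payoff is at least $U_i^{*} - 4\sum_t\Delta_t$, and subtracting from the oracle bound yields $\hat U_i(H) - U_i(H) \le 6\sum_{t=1}^{H}\Delta_t$ after a slightly conservative accounting of the boundary and estimation terms.

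I expect the main obstacle to be the indicator-disagreement term in the truthful lower bound: one must show that noise in both the agent's own estimate $\hat\mu_{it}$ and the competing second price $\hat\gamma_t$ can flip the allocation decision away from the ideal one only inside a band whose width is governed by the learning error, and that the payoff magnitude realized on that band is itself of the same order. Getting a clean, strategy-uniform bound here, while keeping the price $\hat\gamma_t$ fixed as a common random variable across the truthful and strategic worlds, is the crux; the exploration-round cancellation and the oracle upper bound are comparatively routine once the independence structure is invoked.
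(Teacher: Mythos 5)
Your chain of inequalities is sound and reaches the stated bound, but by a genuinely different route from the paper. The paper compares the strategic world $L$ to the truthful world $T$ \emph{directly}, decomposing the profit over the allocation-time sets $S_L \setminus S_T$, $S_T \setminus S_L$, and $S_T \cap S_L$, and bounding each piece via the cross-world shift $\max_{k \neq i}\lbrace \hat\mu_{kt}(w_{kt} \mid T) - \hat\mu_{kt}(w_{kt} \mid L)\rbrace \le 2\max_k |\mu_k(w_{kt}) - \hat\mu_{kt}(w_{kt}\mid T)|$, which costs $2\Delta_t$ per round and totals $6\sum_t \Delta_t$. You instead triangulate both worlds through the strategy-free ideal benchmark $U_i^* = \sum_t \mathbb{E}\big[(\mu_i(w_{it}) - \gamma_t)^+\big]$: your oracle bound $\hat U_i \le U_i^* + \sum_t \Delta_t$ never touches the corrupted estimate $\hat\mu_{it}$ built from misleading reports (a nice move --- the paper has to reason about the deviation through $T$-world estimates instead), and your truthful lower bound via the boundary-band argument costs about $4\sum_t\Delta_t$, so your accounting actually lands at or below the paper's constant. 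Both proofs lean on the same two facts at the same spot: exploration rounds are strategy-independent, and the monotonicity-in-samples (properness) hypothesis lets the exploration-only error $\Delta_t$ dominate the error of estimates trained on all truthful data, despite the adaptive selection of exploitation samples --- a gloss you share with the paper rather than introduce.

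One claim in your setup is false, however, and you should excise it: $\hat\gamma_t$ and the competing estimates $\hat\mu_{jt}$, $j \neq i$, are \emph{not} statistically independent of agent $i$'s reporting strategy, and Assumption~\ref{assumption:independent} does not deliver this. Each $\hat\mu_{jt}$ is trained on agent $j$'s data restricted to $j$'s allocation set $S_{jt}$, and which exploitation rounds $j$ wins depends on $\hat\mu_{it}$, hence on $i$'s reports; by misreporting, agent $i$ can starve or feed the other agents' training sets and thereby move the price. This is precisely why the paper's proof introduces and bounds the cross-world price difference $\hat\gamma_t(T) - \hat\gamma_t(L)$ on $S_T \cap S_L$ rather than treating the price as common to both worlds. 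Your proof survives this error only because your displayed inequalities never actually use the claimed coupling: each world is compared separately to the strategy-free second price $\gamma_t = \max_{j \neq i}\mu_j(w_{jt})$ (which \emph{is} strategy-independent, by Assumption~\ref{assumption:independent}), with within-world estimation errors bounded by $\Delta_t$. The ``crux'' you identify in your closing paragraph --- keeping $\hat\gamma_t$ fixed as a common random variable across the truthful and strategic worlds --- is therefore both unjustified and unnecessary; drop it and let the triangulation through $\gamma_t$ carry the argument.
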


\begin{hproof}
    The proof of Lemma \ref{strategicutility} relies on evaluating the expected utility of an agent deviating from a truthful strategy. We first ignore exploration allocations since they are strategy independent. We then consider an agent who deviates from a truthful reporting strategy $T$ to another strategy $L$. We fix $\hat \mu_{it}$ as the utility estimated from data collected under strategy $T$. We then analyze the expected utility of both strategies. The expected profit of such a strategy is given by the difference between the expected utilities under the new strategy $L$ and the truthful strategy $T$. We decompose this profit into three cases, corresponding to the different allocation times for the item to the agent: those in $S_L \setminus S_T$, those in $S_T \setminus S_L$, and those in $S_T \cap S_L$. For each case, we establish bounds on the differences between the expected utilities of the two strategies. These bounds involve the estimates of utilities for different agents under both strategies and the true utilities of the allocated items. We then combine the results of these cases and obtain an upper bound on the profit of deviating from the truthful strategy, which is $O( \mathbb{E}[\Sigma_t \Delta_t] )$ over all time steps up to time $H$.
\end{hproof}

This bound is worse than \citep{nazerzadeh2013dynamic} by a constant factor since we do not use a history-dependent payment rule which can correct for its past estimation errors. Unlike this work, we use a regret-based framework to proceed with the rest of the analysis. 

\begin{restatable}[]{lemma}{asympIC}
\label{asympIC}
    $\mathcal{M}$ is asymptotically truthful and individually rational. 
\end{restatable}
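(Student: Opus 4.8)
The plan is to dispatch the two claims separately, using Lemma~\ref{strategicutility} for truthfulness and a direct second-price-plus-exploration accounting for individual rationality.

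For asymptotic truthfulness the work is essentially already done by Lemma~\ref{strategicutility}. By definition $\hat U_i(H)-U_i(H)$ is exactly the maximum expected profit an agent can extract by deviating from truthful reporting while the others stay truthful, and Lemma~\ref{strategicutility} bounds this quantity by $6\sum_{t\in[H]}\Delta_t$. The hypothesis of Theorem~\ref{main_result} gives $\sum_{t=1}^H\Delta_t=o(H)$, so $\hat U_i(H)-U_i(H)=o(H)$, which is precisely the definition of asymptotic truthfulness. No further estimates are needed here.

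For individual rationality I would decompose a truthful agent's expected utility over the horizon into its exploration and exploitation contributions and bound each. On exploration rounds the agent is handed the good for free ($p_{it}=0$) and has nonnegative expected value $\mu_i\in[0,1]$, so, since agent $i$ is the explored agent with probability $\eta(t)/n$, these rounds contribute at least $\tfrac{1}{n}\sum_t \mathbb{E}[\eta(t)]\,\mathbb{E}[\mu_i]\ge 0$. On an exploitation round the allocated agent satisfies $\hat\mu_{it}>\hat\mu_{jt}$ for all $j$ and pays $\hat\gamma_t=\max_{j\neq i}\hat\mu_{jt}\le\hat\mu_{it}$, so the per-round utility $\mu_i-\hat\gamma_t$ is at least $\mu_i-\hat\mu_{it}\ge-|\mu_i-\hat\mu_{it}|$. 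Taking expectations and using that, by monotonicity of the learning error in the sample size, the error of the estimate actually used (trained on all of $S_{it}$) is dominated by the exploration-only error appearing in $\Delta_t$, the exploitation rounds contribute at least $-\sum_{t=1}^H\Delta_t$.

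Combining the two, the truthful utility satisfies $U_i(H)\ge\tfrac{1}{n}\mathbb{E}[\mu_i]\sum_t\mathbb{E}[\eta(t)]-\sum_{t=1}^H\Delta_t$. The decisive point is that Theorem~\ref{main_result} controls the error not merely as $o(H)$ but as $o(\sum_t\mathbb{E}[\eta(t)])$, so the nonnegative exploration term strictly dominates the exploitation loss for large $H$, yielding $\liminf_{H\to\infty}U_i(H)\ge0$, i.e.\ Definition~\ref{asympIR}. I expect the main obstacle to be the middle step: justifying that the error of the estimate the mechanism actually deploys (fit on all allocation rounds in $S_{it}$, at the randomized comparison points $c_{ik}$) is bounded by the exploration-only quantity $\Delta_t$, which leans on the monotonicity assumption together with the fact that truthful exploitation reports are themselves valid comparison data. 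A secondary subtlety is the degenerate case $\mathbb{E}[\mu_i]=0$, where the exploration buffer vanishes; there one argues directly that a zero-value agent can win an exploitation round only through estimation error, so both the frequency and the size of its overpayments are governed by the decay of $\Delta_t$, and the liminf bound is recovered from the same hypothesis.
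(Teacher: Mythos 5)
Your proof is correct and takes essentially the same route as the paper: truthfulness follows immediately from Lemma~\ref{strategicutility} together with $\sum_{t}\Delta_t = o(H)$, and individual rationality from the same exploration/exploitation decomposition, bounding the exploitation overcharges via $\hat\gamma_t \le \hat\mu_{it}$ by $\sum_t \Delta_t = o\left(\sum_t \mathbb{E}[\eta_t]\right)$ so the free-allocation gains dominate. The two subtleties you flag---using the monotone-error assumption to relate the deployed estimator (fit on all of $S_{it}$) to the exploration-only quantity $\Delta_t$, and the degenerate case $\mathbb{E}[\mu_i]=0$ where the exploration buffer vanishes---are real finer points that the paper's own proof passes over silently, but they refine rather than alter the argument.
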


\begin{hproof}
    Notice that the expected profit from exploration rounds is the same between strategies. Therefore, by Lemma \ref{strategicutility} we have $\hat U_i(H) - \text{U}_i(H) \le 6 \Sigma_{t \in [H]} \Delta_t$. We also have asymptotic individual rationality because we can bound the overcharges to the agent,
\[
\mathbb{E} \left[ \sum_{t \in S_T} \hat \gamma_t - \mu_{i}(w_{it}) \right] \le \mathbb{E} \left[ \sum_{t \in S_T} \hat \mu_{it}(w_{it}) - \mu_t(w_{it}) \right] \le \mathbb{E} \left[ \sum_{t} \Delta_t \right]
\]
As we send the horizon to infinity, we see that these overcharges are bounded by the error rate. By assumption, this term is dominated by the free allocations, so we have asymptotic ex-ante individual rationality. 
\end{hproof}

Now we examine the question of regret concerning welfare and revenue objectives. Our mechanism may lose revenue during exploration and due to estimation error during exploitation.

\begin{restatable}[]{lemma}{asympW}
\label{asympW}
    $\mathcal{M}$ is asymptotically welfare and revenue maximizing. Compared to a second-price auction $\mathcal{M}$ can obtain welfare and revenue regret $\tilde O(H^{2/3})$ if there is a learning algorithm $\mathcal{L}$ for valuations with slow learning rate $\tilde O(H^{-1/2})$ and regret $\tilde O(H^{1/2})$ if there is an algorithm with a fast rate $\tilde O(H^{-1})$.
\end{restatable}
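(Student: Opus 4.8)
The plan is to bound the welfare and revenue regret separately, in each case splitting the horizon into exploration rounds (where the allocation is random and free) and exploitation rounds (where the allocation and payment are driven by the estimates $\hat\mu_{it}$), and then invoking the standing assumption $\sum_{t=1}^H \Delta_t = o(\sum_{t=1}^H \mathbb{E}[\eta(t)]) = o(H)$ to conclude $o(H)$ for both. Throughout I work in the truthful regime, which is the relevant one given Lemma \ref{asympIC}. Since $\Delta_t$ is by definition the error of $\mathcal{L}$ from the exploration data alone, and the mechanism forms $\hat\mu_{it}$ from at least that data, the monotonicity of the learning error in the sample count (assumed in Theorem \ref{main_result}) lets me take the estimation error on every round to be at most $\Delta_t$.

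For welfare I bound the per-round loss against the benchmark $\max_i \mu_i(w_{it})$. On an exploration round the good goes to a uniformly random agent, so the loss is at most $1$ and these rounds contribute at most $\mathbb{E}[\sum_t \eta(t)]$ in total. On an exploitation round the good goes to $\hat i = \arg\max_j \hat\mu_{jt}$ rather than to $i^{\ast} = \arg\max_j \mu_j$; decomposing $\mu_{i^{\ast}} - \mu_{\hat i} = (\mu_{i^{\ast}}-\hat\mu_{i^{\ast}}) + (\hat\mu_{i^{\ast}}-\hat\mu_{\hat i}) + (\hat\mu_{\hat i}-\mu_{\hat i})$ and using $\hat\mu_{\hat i} \ge \hat\mu_{i^{\ast}}$ bounds this loss by $2\Delta_t$ in expectation. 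Summing yields welfare regret at most $\mathbb{E}[\sum_t \eta(t)] + 2\sum_t \Delta_t = o(H)$.

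For revenue, exploration rounds charge nothing and so forfeit at most $\mathbb{E}[\sum_t\eta(t)]$, since each $\gamma_t \le 1$. On an exploitation round the mechanism collects $\hat\gamma_t$, the second-largest estimate, against the benchmark $\gamma_t$, the second-largest true value. The key step here is that the $k$-th order statistic is $1$-Lipschitz in the sup-norm of the value vector, so $|\hat\gamma_t - \gamma_t| \le \max_k|\mu_k - \hat\mu_{kt}|$ and hence $\mathbb{E}|\hat\gamma_t - \gamma_t| \le \Delta_t$; note this bounds revenue deviations in both directions, so overcharging cannot inflate the comparison either. The revenue regret is therefore at most $\mathbb{E}[\sum_t\eta(t)] + \sum_t\Delta_t = o(H)$.

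For the explicit rates I would optimize the exploration schedule. Writing the learning error after $m$ samples as $\varepsilon(m)$ and noting that by time $t$ a given agent has collected $\Theta(\sum_{s<t}\eta_s/n)$ exploration samples, the total regret takes the form $\mathbb{E}[\sum_t\eta_t] + \sum_t \Delta_t$ with $\Delta_t \approx \varepsilon(\sum_{s<t}\eta_s/n)$. Taking a constant rate $\eta_t \equiv \eta$, the slow rate $\varepsilon(m)=\tilde O(m^{-1/2})$ gives $\sum_t\Delta_t = \tilde O(\sqrt{nH/\eta})$ and total regret $\tilde O(\eta H + \sqrt{nH/\eta})$, minimized at $\eta \asymp (n/H)^{1/3}$ to give $\tilde O(H^{2/3})$; the fast rate $\varepsilon(m)=\tilde O(m^{-1})$ gives $\sum_t\Delta_t=\tilde O(n/\eta)$ and total regret $\tilde O(\eta H + n/\eta)$, minimized at $\eta \asymp \sqrt{n/H}$ to give $\tilde O(H^{1/2})$. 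The main obstacle I anticipate is the sample-count bookkeeping: $\Delta_t$ is an expectation over the random exploration schedule and the random per-agent allocation counts, so converting "number of exploration samples collected by time $t$" into a clean deterministic bound on $\Delta_t$ (and ensuring the order-statistic Lipschitz estimate survives the conditioning built into the definition of $\Delta_t$) is the step that will require the most care.
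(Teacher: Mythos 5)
Your high-level decomposition is exactly the paper's: regret $\le$ (loss from free exploration rounds) $+$ (loss from estimation error on exploitation rounds), i.e., $\mathbb{E}[\sum_t \eta_t] + \sum_t \Delta_t$, followed by tuning the exploration rate against the learning rate. Your welfare argument (the three-term decomposition using $\hat\mu_{\hat i}\ge\hat\mu_{i^\ast}$) and your revenue argument (second-order statistics are $1$-Lipschitz in sup-norm, so $|\hat\gamma_t-\gamma_t|\le\max_k|\mu_k-\hat\mu_{kt}|$) are both correct and are in fact spelled out more explicitly than in the paper, which only presents the revenue case and asserts welfare "goes similarly." The qualitative $o(H)$ claims are therefore fully established by your argument.

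The gap is in the quantitative rates, and it is precisely the step you flag at the end and then decline to execute: passing from "agent $i$ has collected roughly $\sum_{s<t}\eta_s/n$ exploration samples" to an actual bound $\Delta_t \lesssim \varepsilon\bigl(\sum_{s<t}\eta_s/n\bigr)$. Since $\Delta_t$ is an expectation over the random exploration schedule and the random per-agent assignment, the heuristic $\Delta_t \approx \varepsilon(\cdot)$ does not follow from the learning-rate assumption alone; one must control the lower tail of the sample counts. This is where the bulk of the paper's proof lives: it defines $N_t=\sum_{k\le t}\eta_k$, the events $\mathcal{A}=\{N_t\ge\tfrac12\mathbb{E}[N_t]\}$ and $\mathcal{B}=\{N_t(i)\ge\tfrac{1}{2n}N_t\}$, applies multiplicative Chernoff bounds to get $\mathbb{P}(\lnot\mathcal{A})\le e^{-\mathbb{E}[N_H]/8}$ and $\mathbb{P}(\lnot\mathcal{B}\mid\mathcal{A})\le e^{-\mathbb{E}[N_H]/16n}$, and then decomposes $\mathbb{E}[\Delta_t]\le\mathbb{E}[\Delta_t\mid\mathcal{A}\land\mathcal{B}]+\mathbb{P}(\mathcal{A})\mathbb{P}(\lnot\mathcal{B}\mid\mathcal{A})+\mathbb{P}(\lnot\mathcal{A})$, checking that the exponential tails are dominated by the main term for the chosen schedules. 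With that step supplied, your tuning calculation goes through; note also that you tune a constant, horizon-dependent rate $\eta\asymp(n/H)^{1/3}$ (resp.\ $\sqrt{n/H}$), whereas the paper uses the anytime decaying schedule $\eta_t = t^{-1/3}(n\log t)^{(1+2\epsilon)/3}$ (resp.\ $t^{-1/2}(n\log t)^{(1+\epsilon)/2}$), which does not require knowing $H$ in advance but yields the same $\tilde O(H^{2/3})$ and $\tilde O(H^{1/2})$ rates.
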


\textbf{Remarks:} Our results improve upon \citep{nazerzadeh2013dynamic}, which does not provide finite-time regret rates for their algorithm. Furthermore, compared to \citep{guo2022no}, which only considers a single agent against a post price, our mechanism considers a multi-agent setting without assuming the agents' ability to bid. In this work, they establish this bound for applications to user privacy where the context is masked in order to preserve privacy. Since masking is deterministic in this work, we can use standard results from realizable learning theory, see \citet{anthony1999neural}, to conclude our algorithm also achieves $\tilde O(H^{-1/2})$ in the stochastic setting. In general, it is unclear if using an adaptive algorithm would help given $\Omega(H^{2/3})$ lower-bounds on regret in this setting \citep{devanur2009price, babaioff2009characterizing}. Despite these limitations, our mechanism showcases the importance of feedback and welfare in shaping digital goods and data auction mechanisms that are more efficient and user-friendly. 

\subsection{Estimating the value model with pairwise feedback}
\label{pair}

To make our theory concrete, we can consider $\mu_{i}$ that follows a linear model with $d$-features and with standard independent noise and realizability assumption. We will call the resulting mechanism using uniformly sampled comparison prices, linear regression algorithm $\mathcal{L}$, and exploration rate $\eta$ by $\mathcal{M}_{\eta}(\text{linear})$.

\begin{restatable}[]{theorem}{pairMech}
\label{pairMech}
For an exploration rate $\eta_t = t^{-1/3} \cdot (n \log(t))^{(1+2 \epsilon)/3}$  we have that $\mathcal{M}_{\eta}(\text{linear})$ satisfies the conditions of Theorem \ref{main_result} and so is asymptotically individually rational, incentive compatible, and no-regret in revenue and welfare. 
\end{restatable}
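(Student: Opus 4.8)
The plan is to verify that the specific choice $\mathcal{M}_{\eta}(\text{linear})$ satisfies the single analytic hypothesis of Theorem~\ref{main_result}, namely that $\sum_{t=1}^H \Delta_t = o(\sum_{t=1}^H \mathbb{E}[\eta(t)]) = o(H)$, after which the three conclusions follow immediately by that theorem. So the real work is entirely in controlling $\Delta_t$ for the linear model and then doing the two sums. First I would pin down what $\Delta_t$ looks like for a $d$-dimensional linear valuation model estimated by least-squares regression on the exploration data. The key observation is that $\Delta_t$ is defined using \emph{only} the exploration rounds, where allocation is to a uniformly random agent and the comparison price $c_{it}$ is drawn uniformly on $[0,1]$; I would argue that under truthful reporting a report $r_{it}(c)=\mathbb{I}(u_{it}\ge c)$ with $c$ uniform is an unbiased signal whose conditional expectation recovers $\mu_i(w_{it})$, so the regression problem is realizable and the standard realizable-learning guarantee cited in the remarks (Anthony--Bartlett, \citet{anthony1999neural}) gives an expected estimation error of order $\tilde O(m_{it}^{-1/2})$ in the number $m_{it}$ of exploration samples agent $i$ has accumulated by time $t$.

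Next I would count exploration samples. Each round is an exploration round with probability $\eta_t$, and conditioned on exploring, the uniformly chosen agent is $i$ with probability $1/n$; hence the expected number of exploration allocations to a fixed agent by time $t$ is $m_{it}\approx \frac{1}{n}\sum_{s<t}\eta_s$. With $\eta_t = t^{-1/3}(n\log t)^{(1+2\epsilon)/3}$ one has $\sum_{s<t}\eta_s = \Theta\bigl(t^{2/3}(n\log t)^{(1+2\epsilon)/3}\bigr)$, so $m_{it} = \Theta\bigl(n^{-1}t^{2/3}(n\log t)^{(1+2\epsilon)/3}\bigr)$ in expectation. Feeding this into the $\tilde O(m_{it}^{-1/2})$ rate gives a per-round error $\Delta_t = \tilde O\bigl(m_{it}^{-1/2}\bigr) = \tilde O\bigl(n^{1/2}t^{-1/3}(n\log t)^{-(1+2\epsilon)/6}\bigr)$, and summing yields $\sum_{t=1}^H \Delta_t = \tilde O\bigl(H^{2/3}\cdot(\text{polylog})\bigr)$ up to the $n$- and $\epsilon$-dependent factors. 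In parallel I compute $\sum_{t=1}^H \mathbb{E}[\eta(t)] = \Theta\bigl(H^{2/3}(n\log H)^{(1+2\epsilon)/3}\bigr)$. The exponent $(1+2\epsilon)/3$ on the logarithm is chosen precisely so that the logarithmic and $n$ factors in the exploration budget strictly dominate those appearing in $\sum \Delta_t$; this is where the somewhat unusual form of $\eta_t$ earns its keep, and I would make the comparison of the two polylog prefactors explicit to confirm the ratio tends to zero, giving $\sum_t \Delta_t = o(\sum_t \mathbb{E}[\eta(t)])$. Simultaneously $\sum_t \mathbb{E}[\eta(t)] = \tilde O(H^{2/3}) = o(H)$, closing both halves of the hypothesis.

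With the hypothesis of Theorem~\ref{main_result} verified, asymptotic individual rationality, asymptotic truthfulness (incentive compatibility), and asymptotic welfare- and revenue-maximality are inherited directly. For the explicit $\tilde O(H^{2/3})$ rate claimed in the regret conclusion, I would invoke part~3 of Theorem~\ref{main_result} together with the slow-rate branch: since least-squares on a realizable $d$-dimensional linear model attains the slow rate $\tilde O(H^{-1/2})$ (indeed the favorable realizable rate), the theorem supplies welfare and revenue regret $\tilde O(H^{2/3})$, matching the $\Omega(H^{2/3})$ lower bounds of \citet{devanur2009price, babaioff2009characterizing} up to logarithmic factors.

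The main obstacle I anticipate is the first step: justifying that the uniform-price binary reports actually yield a \emph{realizable} regression problem whose excess risk translates into an $L_\infty$-type bound on $\max_k|\mu_k - \hat\mu_k|$ at the test point $w_{kt}$, rather than merely an averaged-over-the-sampling-distribution $L_2$ guarantee. Concretely, $\Delta_t$ as defined is a pointwise (worst-agent, realized-context) error, whereas standard regression bounds control expected squared error under the data distribution; bridging this gap cleanly---either by appealing to the well-specified linear structure so that the parameter estimate $\hat\theta$ concentrates and then bounding $|\langle \hat\theta - \theta, w_{kt}\rangle|$ uniformly over bounded features, or by arguing that the definition of $\Delta_t$ already carries an outer expectation that absorbs the context randomness---is the delicate part. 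A secondary subtlety is handling the randomness of $m_{it}$ itself (it is a sum of Bernoulli draws, not its mean) and the monotonicity-in-samples requirement of Theorem~\ref{main_result}, both of which I would dispatch with a standard concentration argument showing $m_{it}$ is within a constant factor of its expectation with high probability, contributing only lower-order terms.
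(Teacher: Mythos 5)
Your proposal is correct and follows essentially the same route as the paper's proof: the identification step (uniform-price binary reports have conditional mean $\mu_i(w)$, which the paper isolates as Lemma \ref{meanPair}), the $\tilde O(m^{-1/2})$ least-squares rate for the realizable linear model, and the exploration-sample bookkeeping showing $\sum_t \Delta_t = o\left(\sum_t \mathbb{E}[\eta_t]\right) = o(H)$ before invoking Theorem \ref{main_result} (the paper delegates the Chernoff-based counting to the proof of that theorem rather than redoing it here). The $L_2$-versus-pointwise obstacle you flag is resolved exactly by the second route you suggest: since each test context $w_{kt}$ is a fresh draw from the same sampling distribution, the outer expectation in the definition of $\Delta_t$ turns the pointwise error into an $L_1(\nu) \le L_2(\nu)$ quantity, and the max over agents costs only a factor of $n$, which the paper absorbs via the crude bound $\Delta_t \le n \cdot \mathbb{E}[\Vert \hat \mu_t - \mu \Vert]$.
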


The full proof is provided in Appendix \ref{Example_Proof} in the supplementary materials. The main technical step in the proof is to identify the underlying utility from pairwise reports. In our mechanism, we obtain reports $r_{it}$ from agents regarding if the random utility $u_{it}$ derived from their allocation for a request $w_{it}$ is satisfactory. This is based on being above or below some reference payment $c$. Our model is that $r_{it}(c) = \mathbb{I}(u_{it} \ge c)$. In particular, we have the following,

\begin{restatable}[]{lemma}{meanPair}
    \label{meanPair}
    If $u$ is a nonnegative random variable, we have that,
    \begin{equation}
    \mu = \int_0^{1} \mathbb{P}(u \ge c) d c
    \end{equation}
    interpreting the integral as a Lebesgue integral with respect to the Lebesgue measure.
\end{restatable}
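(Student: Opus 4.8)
The plan is to recognize this as the standard ``layer-cake'' (tail-sum) representation of expectation and to prove it by integrating an indicator identity and swapping the order of integration via Tonelli's theorem. The starting point is the pointwise identity that for any realization with $u \in [0,1]$,
\[
\int_0^1 \mathbb{I}(u \ge c)\, dc = \int_0^u dc = u,
\]
where the second equality uses the boundedness $u \le 1$, so that the indicator equals $1$ exactly on $[0,u]$ and $0$ on $(u,1]$. This step also explains why the upper limit of integration is $1$ rather than $\infty$: since $u$ takes values in the unit interval, $\mathbb{P}(u \ge c) = 0$ for every $c > 1$, so truncating the usual $[0,\infty)$ formula at $1$ loses nothing.

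Next I would take expectations of both sides. The right-hand side is simply $\mathbb{E}[u] = \mu$. For the left-hand side I would write $\mu = \mathbb{E}\left[\int_0^1 \mathbb{I}(u \ge c)\, dc\right]$ and interchange the expectation (integration over the sample space $\Omega$) with the Lebesgue integral over $c \in [0,1]$. Since the integrand $(\omega, c) \mapsto \mathbb{I}(u(\omega) \ge c)$ is nonnegative and jointly measurable on the product space $\Omega \times [0,1]$, Tonelli's theorem applies and the interchange is valid with no integrability caveats. Carrying out the swap gives
\[
\mu = \int_0^1 \mathbb{E}[\mathbb{I}(u \ge c)]\, dc = \int_0^1 \mathbb{P}(u \ge c)\, dc,
\]
which is the claimed identity.

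The only genuinely technical point, and the one I would state most carefully, is the joint measurability needed to invoke Tonelli: one must verify that the set $\{(\omega, c) : u(\omega) \ge c\}$ lies in the product $\sigma$-algebra. This holds because $u$ is a measurable random variable, so $(\omega, c) \mapsto u(\omega) - c$ is measurable and its nonnegativity set is measurable; the nonnegativity of the indicator then removes any need for an integrability hypothesis on $u$ (here automatic anyway, as $u$ is bounded). Everything else---the pointwise identity and the boundedness argument fixing the upper limit---is routine, so I expect the proof to be short, with the only care required in phrasing the measurability justification cleanly.
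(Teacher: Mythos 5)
Your proof is correct, but it takes a genuinely different route from the paper's. You use the pointwise identity $\int_0^1 \mathbb{I}(u \ge c)\,dc = u$ (valid since $u \le 1$) and then swap expectation and integration via Tonelli's theorem, justifying the product measurability of $\{(\omega,c) : u(\omega) \ge c\}$. The paper instead builds an increasing sequence of nonnegative simple random variables $u_t \uparrow u$, verifies the tail-sum identity for each simple function by a discrete rearrangement over its (finite) range, and then passes to the limit with the monotone convergence theorem, concluding with the general formula $\mathbb{E}[u] = \int_0^\infty \mathbb{P}(u \ge z)\,dz$. Your approach is shorter and avoids the combinatorial bookkeeping over range sets, at the cost of invoking product-measure theory; the paper's approach stays within single-variable measure theory (simple functions plus MCT) and proves the unbounded version on $[0,\infty)$ directly. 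One point where you are actually more careful than the lemma statement itself: as written, with an arbitrary nonnegative $u$ and upper limit $1$, the identity is false (take $u \equiv 2$); it requires $u \le 1$ almost surely, which holds in the paper's setting since utilities lie in $[0,1]$. You flag exactly this, explaining why truncating the $[0,\infty)$ formula at $1$ loses nothing, whereas the paper's proof silently proves the $\int_0^\infty$ version and leaves the restriction to $[0,1]$ implicit.
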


An immediate corollary of Lemma \ref{meanPair} is that the least-squares estimator for the conditional expectation $\mathbb{E}[r_i | w]$ given by $\hat r_i(w)$ equals $\hat \mu_i(w)$ in the population setting under uniform random sampling of comparison prices. Another implication of this result is that we also obtain a data set of labeled comparisons expressing human feedback on the performance of the underlying generative process. One potential application is in the context of reinforcement learning from human feedback \citep{christiano2017deep}. In this setting, we would also allow comparisons between generated outputs and then train the value model using these pairwise comparisons as constraints. 

\subsection{Experiment with toxicity annotation}
\label{toxicity_setting}

\begin{figure*}[!t]
    \centering
    \begin{subfigure}{0.45\linewidth}
        \includegraphics[width=\linewidth]{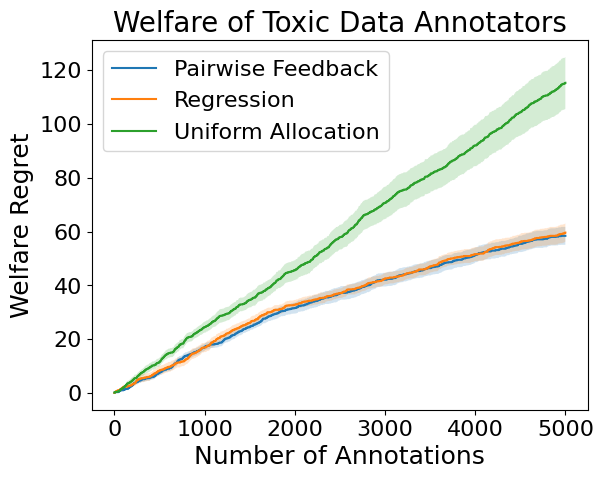}
        \caption{Welfare regret}
        \label{fig:Welfare_Regret}
    \end{subfigure}
    \hfill
    \begin{subfigure}{0.45\linewidth}
        \includegraphics[width=\linewidth]{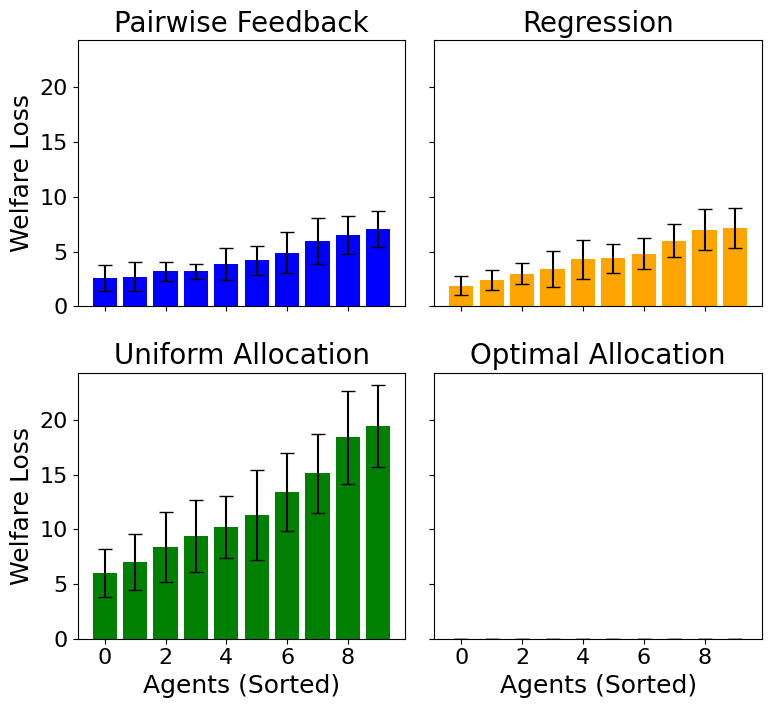}
        \caption{Welfare histogram}
        \label{fig:Welfare_Histogram}
    \end{subfigure}
    \caption{Welfare is the sum of the utilities of all agents across all the allocation periods. We compare the expected welfare regret (a) of different allocation mechanisms for data acquisition vs. an ideal allocation mechanism. Our relative feedback approach elicits relative utility information, regression fits values using utility reports from agents, and uniform allocation methods such as peer prediction make assignments at random. The welfare histogram (b) shows the distribution of welfare losses for each agent under different allocation mechanisms.}
    \label{fig:Welfare}
\end{figure*}

In some cases, allocations to users might result in negative utility, which would mean the mechanism pays users for reporting feedback. For instance, the employment of low-wage workers to enhance AI systems has brought about ethical concerns, such as the distressing impact on workers who review harmful content \citep{steiger2021psychological,perrigo2023exclusive}. In particular, \citep{steiger2021psychological} proposes preventing or reducing exposure as a potential technological intervention strategy. In this section, we present two experiments to evaluate our mechanism as an intervention strategy \footnote{We provide the code for our experiments in the supplementary materials}. 

For the experiments, we use multi-label data from the Toxic Comment Classification Challenge on Kaggle, which contains a large number of Wikipedia comments labeled by human raters for toxic behavior, including categories such as toxic, severe-toxic, obscene, threat, insult, and identity-hate \citep{Kaggle_2018}. We assess the welfare regret of various allocation strategies by comparing them to the strategy that allocates resources to the agent with the highest expected welfare during each period. This concept is formally defined in definition \ref{welfare_def}. The expected utility is modeled with a linear model, as discussed in Section \ref{pair}, and employs 30-dimensional PCA analysis to GloVe features for the data representation \citep{pennington2014glove}. Our experiment involves 10 agents and spans 5000 rounds of allocation. We assume that each agent possesses a fixed type sensitivity of which they are unaware, which determines their utility function. Every agent classifies examples sampled i.i.d from the dataset as toxic or non-toxic by reporting their relative utility from viewing the example. Moreover, when an agent encounters an example with a label matching their type sensitivity, they lose one unit of utility. For instance, an agent may be particularly sensitive to obscene examples.

We evaluate the welfare performance of three mechanisms: our method based on relative feedback, approaches that directly regress utility, and uniform assignment approaches. The direct utility regression methods, including \citep{devanur2009price, babaioff2009characterizing, nazerzadeh2013dynamic}, vary in terms of payment structures, while learn-to-bid methods \citep{guo2022no} assume workers estimate their own value and learn to bid accordingly. Uniform assignment approaches, on the other hand, make no attempt to intervene in the content allocation process and encompass most peer-prediction methods that pay based on conformity rather than utility, focusing on incentive compatibility issues \citep{prelec2004bayesian, witkowski2012robust, cai2015optimum}.

Our experimental results in Figure \ref{fig:Welfare_Regret} and \ref{fig:Welfare_Histogram} show that using an auction mechanism can significantly improve the welfare of the allocations given to agents over the peer-prediction method and performs favorably to the optimal allocation strategy. Moreover, our relative elicitation mechanism is competitive with the full information set but has the advantage of being simpler for users to report on. For example, while we make no further modifications to the calculation of welfare beyond what has already been discussed, some works assume there is a further cost to complicated elicitation strategies \citep{cai2015optimum}. 

\section{Limitations and future work}

\label{sec:limitations}

Our proposed approach has some limitations that warrant further exploration. While we propose an intervention method to improve the social welfare of workers doing toxic annotation, other aspects, such as negative psychological impacts and systemic issues around who does such work, are left unaddressed. While we provide a mechanism for the dynamic setting, we assume value evolves independently of other agents and the mechanism allocations, which prohibits us from studying collusion or adversarial scenarios. In particular, agents who make alias accounts could game the mechanism for free exploration allocations. Also, it is possible a randomized approach could improve upon the rates presented. Finally, we think further exploration as auction mechanism for selling digital goods in real-world settings is an important direction. 

\section{Conclusions and societal impact}

In this paper, we have presented a novel approach to auctioning AI services that emphasizes user-friendly bidding, extends to multi-agent and contextual settings, offers simpler mechanisms, and improved bounds. Our approach has the potential for significant societal impact by facilitating a more efficient allocation of AI services and enabling a wider range of users to access these services without requiring them to have a deep understanding of their value. At the same time, we recognize that the collection of feedback or toxic annotation data may have negative externalities on the privacy and welfare of workers. Addressing the limitations and ethical concerns identified, we can move towards a future where AI services are more accessible, efficient, and ethically responsible, ultimately leading to a positive impact on society.

\begin{ack}
I'd like to thank Ellen Vitercik for advising during the initial development of theory for the mechanism. I'd also like to thank Neil Band for suggesting I pursue simplifying the reporting method. 
\end{ack}

\bibliographystyle{unsrtnat}
\bibliography{references}

\begin{thebibliography}{34}
\providecommand{\natexlab}[1]{#1}
\providecommand{\url}[1]{\texttt{#1}}
\expandafter\ifx\csname urlstyle\endcsname\relax
  \providecommand{\doi}[1]{doi: #1}\else
  \providecommand{\doi}{doi: \begingroup \urlstyle{rm}\Url}\fi

\bibitem[Mor(2023)]{Morgan_2023}
Mar 2023.
\newblock URL \url{https://www.jpmorgan.com/insights/research/generative-ai}.

\bibitem[Paul and Dang(2023)]{Paul_Dang_2023}
Katie Paul and Sheila Dang.
\newblock Facebook owner meta announces tests of generative ai ads tool, May
  2023.
\newblock URL
  \url{https://www.reuters.com/technology/facebook-owner-meta-announces-tests-generative-ai-ads-tool-2023-05-11/}.

\bibitem[Sta(2023)]{Statista_2023}
Feb 2023.
\newblock URL
  \url{https://www.statista.com/outlook/dmo/digital-advertising/worldwide}.

\bibitem[Karpf(2023)]{Karpf_2023a}
David Karpf.
\newblock Money will kill chatgpt’s magic, Jan 2023.
\newblock URL
  \url{https://www.theatlantic.com/technology/archive/2022/12/chatgpt-ai-chatbots-openai-cost-regulations/672539/}.

\bibitem[Feng et~al.(2018)Feng, Podimata, and Syrgkanis]{feng2018learning}
Zhe Feng, Chara Podimata, and Vasilis Syrgkanis.
\newblock Learning to bid without knowing your value.
\newblock In \emph{Proceedings of the 2018 ACM Conference on Economics and
  Computation}, pages 505--522, 2018.

\bibitem[Guo et~al.(2022)Guo, Jordan, and Vitercik]{guo2022no}
Wenshuo Guo, Michael Jordan, and Ellen Vitercik.
\newblock No-regret learning in partially-informed auctions.
\newblock In \emph{International Conference on Machine Learning}, pages
  8039--8055. PMLR, 2022.

\bibitem[Nazerzadeh et~al.(2013)Nazerzadeh, Saberi, and
  Vohra]{nazerzadeh2013dynamic}
Hamid Nazerzadeh, Amin Saberi, and Rakesh Vohra.
\newblock Dynamic pay-per-action mechanisms and applications to online
  advertising.
\newblock \emph{Operations Research}, 61\penalty0 (1):\penalty0 98--111, 2013.

\bibitem[Balcan et~al.(2005)Balcan, Blum, Hartline, and
  Mansour]{balcan2005mechanism}
M-F Balcan, Avrim Blum, Jason~D Hartline, and Yishay Mansour.
\newblock Mechanism design via machine learning.
\newblock In \emph{46th Annual IEEE Symposium on Foundations of Computer
  Science (FOCS'05)}, pages 605--614. IEEE, 2005.

\bibitem[Devanur and Kakade(2009)]{devanur2009price}
Nikhil~R Devanur and Sham~M Kakade.
\newblock The price of truthfulness for pay-per-click auctions.
\newblock In \emph{Proceedings of the 10th ACM conference on Electronic
  commerce}, pages 99--106, 2009.

\bibitem[Babaioff et~al.(2009)Babaioff, Sharma, and
  Slivkins]{babaioff2009characterizing}
Moshe Babaioff, Yogeshwer Sharma, and Aleksandrs Slivkins.
\newblock Characterizing truthful multi-armed bandit mechanisms.
\newblock In \emph{Proceedings of the 10th ACM conference on Electronic
  commerce}, pages 79--88, 2009.

\bibitem[Babaioff et~al.(2015)Babaioff, Kleinberg, and
  Slivkins]{babaioff2015truthful}
Moshe Babaioff, Robert~D Kleinberg, and Aleksandrs Slivkins.
\newblock Truthful mechanisms with implicit payment computation.
\newblock \emph{Journal of the ACM (JACM)}, 62\penalty0 (2):\penalty0 1--37,
  2015.

\bibitem[Immorlica et~al.(2005)Immorlica, Jain, Mahdian, and
  Talwar]{immorlica2005click}
Nicole Immorlica, Kamal Jain, Mohammad Mahdian, and Kunal Talwar.
\newblock Click fraud resistant methods for learning click-through rates.
\newblock In \emph{Internet and Network Economics: First International
  Workshop, WINE 2005, Hong Kong, China, December 15-17, 2005. Proceedings 1},
  pages 34--45. Springer, 2005.

\bibitem[Agarwal et~al.(2009)Agarwal, Athey, and Yang]{agarwal2009skewed}
Nikhil Agarwal, Susan Athey, and David Yang.
\newblock Skewed bidding in pay-per-action auctions for online advertising.
\newblock \emph{American Economic Review}, 99\penalty0 (2):\penalty0 441--447,
  2009.

\bibitem[Mahdian and Tomak(2008)]{mahdian2008pay}
Mohammad Mahdian and Kerem Tomak.
\newblock Pay-per-action model for on-line advertising.
\newblock \emph{International Journal of Electronic Commerce}, 13\penalty0
  (2):\penalty0 113--128, 2008.

\bibitem[Bergemann and Pesendorfer(2007)]{bergemann2007information}
Dirk Bergemann and Martin Pesendorfer.
\newblock Information structures in optimal auctions.
\newblock \emph{Journal of economic theory}, 137\penalty0 (1):\penalty0
  580--609, 2007.

\bibitem[Epasto et~al.(2021)Epasto, Mu{\~n}oz~Medina, Avery, Bai, Busa-Fekete,
  Carey, Gao, Guthrie, Ghosh, Ioannidis, et~al.]{epasto2021clustering}
Alessandro Epasto, Andr{\'e}s Mu{\~n}oz~Medina, Steven Avery, Yijian Bai,
  Robert Busa-Fekete, CJ~Carey, Ya~Gao, David Guthrie, Subham Ghosh, James
  Ioannidis, et~al.
\newblock Clustering for private interest-based advertising.
\newblock In \emph{Proceedings of the 27th ACM SIGKDD Conference on Knowledge
  Discovery \& Data Mining}, pages 2802--2810, 2021.

\bibitem[Cai et~al.(2015)Cai, Daskalakis, and Papadimitriou]{cai2015optimum}
Yang Cai, Constantinos Daskalakis, and Christos Papadimitriou.
\newblock Optimum statistical estimation with strategic data sources.
\newblock In \emph{Conference on Learning Theory}, pages 280--296. PMLR, 2015.

\bibitem[Chen et~al.(2023)Chen, Wu, Wu, and Yang]{chen2023learning}
Siyu Chen, Jibang Wu, Yifan Wu, and Zhuoran Yang.
\newblock Learning to incentivize information acquisition: Proper scoring rules
  meet principal-agent model.
\newblock \emph{arXiv preprint arXiv:2303.08613}, 2023.

\bibitem[Zhao and Ermon(2021)]{zhao2021right}
Shengjia Zhao and Stefano Ermon.
\newblock Right decisions from wrong predictions: A mechanism design
  alternative to individual calibration.
\newblock In \emph{International Conference on Artificial Intelligence and
  Statistics}, pages 2683--2691. PMLR, 2021.

\bibitem[Karimireddy et~al.(2022)Karimireddy, Guo, and
  Jordan]{karimireddy2022mechanisms}
Sai~Praneeth Karimireddy, Wenshuo Guo, and Michael~I Jordan.
\newblock Mechanisms that incentivize data sharing in federated learning.
\newblock \emph{arXiv preprint arXiv:2207.04557}, 2022.

\bibitem[Prelec(2004)]{prelec2004bayesian}
Drazen Prelec.
\newblock A bayesian truth serum for subjective data.
\newblock \emph{science}, 306\penalty0 (5695):\penalty0 462--466, 2004.

\bibitem[Witkowski and Parkes(2012)]{witkowski2012robust}
Jens Witkowski and David Parkes.
\newblock A robust bayesian truth serum for small populations.
\newblock In \emph{Proceedings of the AAAI Conference on Artificial
  Intelligence}, volume~26, pages 1492--1498, 2012.

\bibitem[Perrigo(2023)]{perrigo2023exclusive}
Billy Perrigo.
\newblock Exclusive: Openai used kenyan workers on less than \$2 per hour to
  make chatgpt less toxic.
\newblock \emph{Last accessed}, 19, 2023.

\bibitem[Burns et~al.(2008)Burns, Morley, Bradshaw, and
  Domene]{burns2008emotional}
Carolyn~M Burns, Jeff Morley, Richard Bradshaw, and Jos{\'e} Domene.
\newblock The emotional impact on and coping strategies employed by police
  teams investigating internet child exploitation.
\newblock \emph{Traumatology}, 14\penalty0 (2):\penalty0 20--31, 2008.

\bibitem[Arsht and Etcovitch(2018)]{arsht2018human}
Andrew Arsht and Daniel Etcovitch.
\newblock The human cost of online content moderation.
\newblock \emph{Harvard Journal of Law and Technology}, 2018.

\bibitem[Steiger et~al.(2021)Steiger, Bharucha, Venkatagiri, Riedl, and
  Lease]{steiger2021psychological}
Miriah Steiger, Timir~J Bharucha, Sukrit Venkatagiri, Martin~J Riedl, and
  Matthew Lease.
\newblock The psychological well-being of content moderators: the emotional
  labor of commercial moderation and avenues for improving support.
\newblock In \emph{Proceedings of the 2021 CHI conference on human factors in
  computing systems}, pages 1--14, 2021.

\bibitem[Pavan et~al.(2009)Pavan, Segal, and Toikka]{pavan2009dynamic}
Alessandro Pavan, Ilya~R Segal, and Juuso Toikka.
\newblock Dynamic mechanism design: Incentive compatibility, profit
  maximization and information disclosure.
\newblock \emph{Profit Maximization and Information Disclosure (May 1, 2009)},
  2009.

\bibitem[Kakade et~al.(2013)Kakade, Lobel, and Nazerzadeh]{kakade2013optimal}
Sham~M Kakade, Ilan Lobel, and Hamid Nazerzadeh.
\newblock Optimal dynamic mechanism design and the virtual-pivot mechanism.
\newblock \emph{Operations Research}, 61\penalty0 (4):\penalty0 837--854, 2013.

\bibitem[Myerson(1981)]{myerson1981optimal}
Roger~B Myerson.
\newblock Optimal auction design.
\newblock \emph{Mathematics of operations research}, 6\penalty0 (1):\penalty0
  58--73, 1981.

\bibitem[Anthony et~al.(1999)Anthony, Bartlett, Bartlett,
  et~al.]{anthony1999neural}
Martin Anthony, Peter~L Bartlett, Peter~L Bartlett, et~al.
\newblock \emph{Neural network learning: Theoretical foundations}, volume~9.
\newblock cambridge university press Cambridge, 1999.

\bibitem[Christiano et~al.(2017)Christiano, Leike, Brown, Martic, Legg, and
  Amodei]{christiano2017deep}
Paul~F Christiano, Jan Leike, Tom Brown, Miljan Martic, Shane Legg, and Dario
  Amodei.
\newblock Deep reinforcement learning from human preferences.
\newblock \emph{Advances in neural information processing systems}, 30, 2017.

\bibitem[Kag(2018)]{Kaggle_2018}
2018.
\newblock URL \url{https://tinyurl.com/jigsaw-toxic-comment-challenge}.

\bibitem[Pennington et~al.(2014)Pennington, Socher, and
  Manning]{pennington2014glove}
Jeffrey Pennington, Richard Socher, and Christopher~D Manning.
\newblock Glove: Global vectors for word representation.
\newblock In \emph{Proceedings of the 2014 conference on empirical methods in
  natural language processing (EMNLP)}, pages 1532--1543, 2014.

\bibitem[Gy{\"o}rfi et~al.(2002)Gy{\"o}rfi, K{\"o}hler, Krzy{\.z}ak, and
  Walk]{gyorfi2002distribution}
L{\'a}szl{\'o} Gy{\"o}rfi, Michael K{\"o}hler, Adam Krzy{\.z}ak, and Harro
  Walk.
\newblock \emph{A distribution-free theory of nonparametric regression},
  volume~1.
\newblock Springer, 2002.

\end{thebibliography}

\newpage

\appendix

\section{Summary of notation}
\label{notationSum}

\begin{table}[ht]
\centering
\smallskip
\caption{Summary of notation}
\smallskip
\begin{tabular}{ll} 
    Notation & Definition \\
    \hline
    $i \in [n]$ & Agent $i$ out of $n$ \\
    $t \in [H]$ & Periods $t$ in horizon $H$ \\
    $w_{it} \in \mathcal{W}$ & Agent request \\
    $o_{it} \in \mathcal{O}$ & Produced good from request \\
    $u_{it} \in [0,1]$ & Agent $i$ utility at time $t$ \\
    $\mu_{i}$ &  Expected value function of agent $i$ \\
    $\mathcal{L}$ & Learning algorithm for $\mu_{i}$ \\
    $\Delta_t$ & $\mathcal{L}$'s error - exploration rounds \\
    $\eta_t$ & Exploration frequency \\
    $\hat \mu_{it}$ & Estimate for $\mu_{i}$ with $\mathcal{L}$\\
    $\hat \gamma_t$ & Estimated second-price \\
    $S_{it}$ & Allocations of $i$ up to time $t$ \\
    $x_{it} \in \lbrace 0, 1 \rbrace$ & Allocation indicator \\
    $\hat x_{it}$ & Allocations with $\mathcal{L}$ \\
    $y_{it} \in \lbrace 0, 1 \rbrace$ & Exploitation indicator \\
    $r_{it} \in \lbrace 0, 1 \rbrace$ & Report of agent $i$ at time $t$  \\
    $p_{it}$ & Payment of agent $i$ at time $t$ \\
    $c_{it}$ & Comparison price \\ 
    \hline
\end{tabular}
\smallskip
\end{table}

\section{Omitted proofs}

\subsection{Proof of theorem \protect\ref{main_result}}
\label{Main_Proof}

\mainresult*

We proceed by first showing that strategic reporting isn't particularly profitable compared to the profit from free allocations. Following this, we establish asymptotic truthfulness and individual rationality properties. Finally, we analyze the welfare and revenue regret of our mechanism. 

\strategicutility*

\begin{proof}

Recall that $\Delta_t$ is the expected maximum error of the learning algorithm over all the agents $i \in [n]$ with respect to exploration round data. We denote the set of allocation times with $S := \lbrace t \in [H] : \hat x_{it} = 1 \rbrace$. Recall there are numerous reporting strategies $r_{it} : (\mathcal{W} \times \mathcal{O}) \times [0,1] \to \lbrace 0 , 1 \rbrace$ each agent could use. Accordingly, when it is important to specify the reporting strategies used by the agents, we will adorn these definitions in some fashion with the relevant strategy $L$ e.g., $S_L$, $\hat x_{it}(L), \hat \mu_{it}(w_{it} | L)$.

We ignore exploration allocations since they are strategy independent and therefore cancel each other out. Still, our agent may try to deviate from being truthful under some reporting strategy $T$ to some other reporting strategy $L$ that induces a new set of allocation times $S_L$ for the item to the agent. We fix $\hat \mu_{it}$ to represent the utility estimated from data collected under strategy $T$. By the tower property the expected utility for a given strategy satisfies,
\[
U_i(L, H) = \mathbb{E} \left[\sum_{t \in S_L} u_{it}(w_{it}) - \hat \gamma_t(L) \right] 
\]
\[
= \mathbb{E} \left[ \sum_{t \in S_L} \mathbb{E} \left[ u_{it}(w_{it}) - \hat \gamma_t(L) \big \vert \lbrace (w_{ik}, c_{ik}, r_{ik})\rbrace_{k \in [t]} \right] \right]
\]
\[
= \mathbb{E} \left[ \sum_{t \in S_L} \mu_{i}(w_{it} ) - \hat \gamma_t(L) \right]
\]

Accordingly, the expected profit of such a strategy is given by,

\[
\text{Profit}(L,H) = U_i(L,H) - U_i(T,H)
\]
\[
= \mathbb{E} \left[ \sum_{t \in S_L} \mu_{i}(w_{it}) - \hat \gamma_t(L) \right] - \mathbb{E} \left[ \sum_{t \in S_T} \mu_{i}(w_{it}) - \hat \gamma_t(T) \right]
\]
\[
= \mathbb{E} \left[\sum_{t \in S_L \setminus S_T} \mu_{i}(w_{it}) - \hat \gamma_t(L) \right] - \mathbb{E} \left[\sum_{t \in S_T \setminus S_L} \mu_{i}(w_{it} ) - \hat \gamma_t(T) \right]  + \mathbb{E} \left[\sum_{t \in S_T \cap S_L} \hat \gamma_t(T) - \hat \gamma_t(L) \right]
\]

In the last line, observe that whenever $t \in S_T \setminus S_L$ we have that $\hat \mu_{it}(w_{it} | T) > \hat \gamma_t(T)$ and so,
\[
\mu_{i}(w_{it} ) - \hat \gamma_t(T) = \mu_{i}(w_{it} ) - \hat \mu_{it}(w_{it} | T) + \underbrace{\hat \mu_{it}(w_{it} | T) - \hat \gamma_t(T)}_{> 0}\]
\[\Rightarrow \mu_{i}(w_{it} ) - \hat \gamma_t(T) \ge \mu_{i}(w_{it}) - \hat \mu_{it}(w_{it} | T), \quad t \in S_T \setminus S_L
\]
For $t \in S_L \setminus S_T$ let $j$ be the agent who would receive the item if agent $i$ were truthful. We have $\hat \mu_{jt}(w_{jt} | T) \ge \hat \mu_{it}(w_{it} | T)$. So we have,
\[
\mu_{i}(w_{it}) - \hat \gamma_t(L) \le \mu_{i}(w_{it}) - \hat \mu_{it}(w_{it} | T) + \hat \mu_{jt}(w_{jt} | T) - \hat \gamma_t(L) \quad t \in S_L \setminus S_T
\]
We also have that $\hat \mu_{jt}(w_{jt} | L) \le \hat \gamma_t(L)$ so we have,
\[
\mu_{i}(w_{it}) - \hat \gamma_t(L) \le (\mu_{i}(w_{it}) - \hat \mu_{it}(w_{it} | T)) + \max_{k \not = i} \lbrace \hat \mu_{kt}(w_{kt} | T) - \hat \mu_{kt}(w_{kt} | L) \rbrace \quad t \in S_L \setminus S_T
\]
Finally, for $t \in S_T \cap S_L$ we let $j \not = i$ be the agent with the highest $\hat \mu_{jt}(w_{jt} | T)$. So we have,
\[
\hat \gamma_t(T) - \hat \gamma_t(L) \le \hat \mu_{jt}(w_{jt} | T) - \hat \mu_{jt}(w_{jt} | L) \le \max_{k \not = i} \lbrace \hat \mu_{kt}(w_{kt} | T) - \hat \mu_{kt}(w_{kt} | L) \rbrace
\]
After substitution, we end up with a simplified expression for the $i^{\text{th}}$ agent's profit:

\begin{equation}
    \label{final_profit}
\text{Profit}(L,H) \le \mathbb{E} \left[\sum_{t \in S_L \setminus S_T} (\mu_{i}(w_{it}) - \hat \mu_{it}(w_{it} | T)) + \max_{k \not = i} \lbrace \hat \mu_{kt}(w_{kt} | T) - \hat \mu_{kt}(w_{kt} | L) \rbrace \right] 
\end{equation}
\[- \mathbb{E} \left[\sum_{t \in S_T \setminus S_L} \mu_{i}(w_{it} ) - \hat \mu_{it}(w_{it} | T) \right]  + \mathbb{E} \left[\sum_{t \in S_T \cap S_L} \max_{k \not = i} \lbrace \hat \mu_{kt}(w_{kt} | T) - \hat \mu_{kt}(w_{kt} | L) \rbrace \right] \]

To proceed, we observe that:
\[
\max_{k \not = i} \lbrace \hat \mu_{kt}(w_{k} | T) - \hat \mu_{kt}(w_{kt} | L) \rbrace \]

\[\le \max_{k \not = i} |\mu_{k}(w_{k}) - \hat \mu_{kt}(w_{kt} | T)| + |\mu_{k}(w_{kt}) - \hat \mu_{kt}(w_{kt} | L)|
\]
\[
\le \max_{k \not = i} |\mu_{k}(w_{kt}) - \hat \mu_{kt}(w_{kt} | T)| + \max_{k \not = i}  |\mu_{k}(w_{kt}) - \hat \mu_{kt}(w_{kt} | L)|\]

Recall the assumption that only the agent $i$ is misreporting. Therefore, all the agents except $i$ are truthful so each term can be bounded by $\Delta_t$ and so we end up with,

\[
\max_{k \not = i} \lbrace \hat \mu_{kt}(w_{kt} | T) - \hat \mu_{kt}(w_{kt} | L) \rbrace \le 2 \max_k | \mu_{k}(w_{kt}) - \hat \mu_{kt}(w_{kt} | T)|
\]
\[
\Rightarrow \mathbb{E}[\max_{k \not = i} \lbrace \hat \mu_{kt}(w_{kt} | T) - \hat \mu_{kt}(w_{kt} | L) \rbrace] \le 2 \Delta_t
\]

Here we are using the assumption that the learning algorithm is proper since $\Delta_t$ is the expected error over \textit{only} the exploration rounds. Substituting the result into Eq. \ref{final_profit} we obtain:

\[
\text{Profit}(L,H) \le 6 \sum_{t = 1}^H \Delta_t 
\]

as claimed, which completes the proof. 

\end{proof}

\asympIC*

\begin{proof}
    Notice that we have,

\[
\text{U}_i(H) = \mathbb{E} \left[\sum_{t \in S_T } (\mu_{i}(w_{it})-\hat \gamma_t) \right] + \frac{1}{n} \mathbb{E} \left[ \sum_{t = 1}^H \eta_t \mu_{i}(w_{it}) \right]
\]

\[
\hat U_i(H) - \text{U}_i(H) \le 6 \Sigma_{t \in [H]} \Delta_t  = o \left( \mathbb{E} \left[ \sum_{t} \eta_t \right] \right) = o(H)
\]

The last step follows from Lemma \ref{strategicutility} and since, by assumption, the free allocation rate strictly dominates the error rate of the estimation procedure.

We also have asymptotic individual rationality because we can bound the overcharges to the agent,
\[
\mathbb{E} \left[ \sum_{t \in S_T} \hat \gamma_t - \mu_{i}(w_{it}) \right] \le \mathbb{E} \left[ \sum_{t \in S_T} \hat \mu_{it}(w_{it}) - \mu_i(w_{it}) \right] \le \mathbb{E} \left[ \sum_{t} \Delta_t \right]
\]
As we send the horizon to infinity, we see that these overcharges are bounded by the error rate and, by assumption, this term is dominated by the free allocations, so we have,

\[\mathbb{E} \left[ \sum_{t \in S_T} \hat \gamma_t - \mu_{i}(w_{it}) \right] \le \mathbb{E} \left[ \sum_{t} \Delta_t \right] = o \left( \mathbb{E} \left[ \sum_{t} \eta_t \right] \right) = o(H)\]

which establishes asymptotic ex-ante individual rationality. 
\end{proof}

\asympW*

\begin{proof}
We present the argument for revenue regret here. The argument for welfare regret goes similarly. The revenue minus free allocations satisfies,
\[
\text{Revenue}(H) = \mathbb{E} \left[ \sum_{t = 1}^H (1 - \eta_t) \hat \gamma_t \right] \ge \mathbb{E} \left[ \sum_{t = 1}^H (1 - \eta_t) (\gamma_t - \Delta_t) \right]
\]

\[
\ge \mathbb{E} \left[ \sum_{t = 1}^H \gamma_t \right] - \mathbb{E} \left[ \sum_{t = 1}^H \eta_t \right] - \mathbb{E} \left[ \sum_{t = 1}^H \Delta_t \right]
\]

We conclude that,
\[
\text{Regret}(H) \le  \mathbb{E} \left[ \sum_{t = 1}^H \eta_t \right] + \mathbb{E} \left[ \sum_{t = 1}^H \Delta_t \right]
\]
We call $N_t = \sum_{k = 1}^t \eta_k$ the expected number of exploration rounds and $N_t(i)$ the number of exploration rounds given to the $i^{\text{th}}$ agent. Now we consider events when they are near their expectation. We use the following multiplicative Chernoff bound for sums of random variables bounded to the unit-interval \footnote{\url{https://en.wikipedia.org/wiki/Chernoff_bound}}:

\[ \mathbb{P}(N_t \le (1 - \delta) \mu) \le e^{-\delta^2 \mu / 2} \]

We define the event $\mathcal{A} = \lbrace N_t \ge \frac{1}{2} \mathbb{E}[N_t] \rbrace$ to capture the situation where the empirically observed number of exploration rounds is not too far below the expectation. Using the Chernoff bound we see,
\[
\mathbb{P}(\lnot \mathcal{A}) \le e^{-\mathbb{E}[N_H] / 8}
\]
We also the event $\mathcal{B} = \lbrace N_{t}(i) \ge \frac{1}{2 n} N_t \rbrace$ that the number of exploration rounds for a particular agent $i$ is not too far from the expectation. Using the Chernoff bound once again we have,
\[
\mathbb{P}(\lnot \mathcal{B}) \le e^{- N_H / 8 n} \\
\Rightarrow \mathbb{P}(\lnot \mathcal{B} | \mathcal{A}) \le e^{-\mathbb{E}[N_H] / 16 n}
\]
Conditioned on both of these events we have,
\begin{equation}
\label{learning_bound}
\mathbb{E}[\Delta_t] \le \mathbb{E}[\Delta_t | \mathcal{A} \land \mathcal{B}] \cdot \mathbb{P}(\mathcal{A} \land \mathcal{B}) + 1 \cdot \mathbb{P}(\mathcal{A}) \cdot \mathbb{P}(\lnot \mathcal{B} | \mathcal{A}) + 1 \cdot \mathbb{P}(\lnot \mathcal{A})
\end{equation}

We will use the fact that conditional on $\mathcal{A} \land \mathcal{B}$ we have $\mathbb{E}[N_t(i)] \ge \frac{t \cdot \eta_t}{4 n}$. There will be two cases to consider. The first corresponds to a slow learning algorithm, and the second to a fast learning algorithm. 

\textbf{Case 1:} We have $\mathbb{E}[\Delta_t | N_t(i) = t] = O(\sqrt{\log(t) / t})$.

Without loss of generality, we can scale the bound and shift $t$ by a constant so that we can assume $\mathbb{E}[\Delta_t | N_t(i) = t] \le \sqrt{\log(t) / t}$. From this, we deduce that:
\[
\mathbb{E}[\Delta_t | \mathcal{A} \land \mathcal{B}]  \le \sqrt{\frac{4 n \log \left( \frac{t \eta_t}{ 4n} \right)}{t \eta_t}} \le \sqrt{\frac{4 n \log(t /4)}{t \eta_t}}\]
\[
\Rightarrow \mathbb{E}[\Delta_t] \le \sqrt{\frac{4 n \log(t /4)}{t \eta_t}} +e^{-t \eta_t / 8} +  e^{-t \eta_t / 16n}
\]
where in the last step we substitute into Eq. \ref{learning_bound}. For $t \ge 4 e$ we have,
\[
\sqrt{\frac{4 n \log(t /4)}{t \eta_t}} \ge \sqrt{\frac{4 n }{t \eta_t}}
\]
Therefore,
\[
\sup_{t \ge 4 e} \frac{e^{-t \eta_t / 16 n}}{\sqrt{\frac{4 n \log(t /4)}{t \eta_t}} } \le \sup_{t \ge 4 e} \frac{e^{-t \eta_t / 16 n}}{\sqrt{\frac{4 n}{t \eta_t}} } = \sqrt{\frac{2}{e}}\]
\[
\sup_{t \ge 4 e} \frac{e^{-t \eta_t / 8}}{\sqrt{\frac{4 n \log(t /4)}{t \eta_t}} } \le \sup_{t \ge 4 e} \frac{e^{-t \eta_t / 8}}{\sqrt{\frac{4 n}{t \eta_t}} } = \sqrt{\frac{1}{n \cdot e}}
\]
Subsequently, we have for $t \ge 4 e$ that,
\[
\mathbb{E}[\Delta_t] \le c \cdot \sqrt{\frac{n \log(t)}{t \eta_t}} \ , \quad c = 2 \cdot \left(1 + \sqrt{2/e} + \sqrt{1/e} \right)
\]
Now we take $\eta_t = t^{-1/3} \cdot (n \log(t))^{(1+2 \epsilon)/3}$ and then for $t \ge 4 e$ we arrive at,
\[
\mathbb{E}[\eta_t] = t^{-1/3} \cdot (n \log(t))^{(1+2\epsilon)/3}\]
\[
\mathbb{E}[\Delta_t] \le  c \cdot \sqrt{\frac{n \log(t)}{t^{2/3} \cdot (n \log(t))^{(1+2\epsilon)/3}}} = c \cdot t^{-1/3} (n \log(t))^{(1-\epsilon)/3}
\]
We see from an integration that $\text{Regret}(H) = O(H^{2/3} \cdot \log(H)^{(1 + 2 \epsilon) / 3})$ which establishes the result. Since $\Sigma_t \mathbb{E}[\Delta_t] = o(\Sigma_t \mathbb{E}[\eta_t])$ this regret bound is consistent with the assumption in Theorem \ref{main_result}. 

\textbf{Case 2:} We have $\mathbb{E}[\Delta_t | N_t(i) = t] = O(\log(t) / t)$.

Without loss of generality, we can scale the bound and shift $t$ by a constant so that we can assume $\mathbb{E}[\Delta_t | N_t(i) = t] \le \log(t) / t$. From this we deduce that,
\[
\mathbb{E}[\Delta_t | \mathcal{A} \land \mathcal{B}]  \le \frac{4 n \log \left( \frac{t \eta_t}{ 4n} \right)}{t \eta_t} \le \frac{4 n \log(t /4)}{t \eta_t} \]
\[\Rightarrow \mathbb{E}[\Delta_t] \le \frac{4 n \log(t /4)}{t \eta_t} +e^{-t \eta_t / 8} +  e^{-t \eta_t / 16n}
\]
where in the last step we substitute into Eq. \ref{learning_bound}. For $t \ge 4 e$ we have,
\[
\frac{4 n \log(t /4)}{t \eta_t} \ge \frac{4 n }{t \eta_t}
\]
Therefore,
\[
\sup_{t \ge 4 e} \frac{e^{-t \eta_t / 16 n}}{\frac{4 n \log(t /4)}{t \eta_t} } \le \sup_{t \ge 4 e} \frac{e^{-t \eta_t / 16 n}}{\frac{4 n}{t \eta_t} } = \frac{4}{e}\]
\[
\sup_{t \ge 4 e} \frac{e^{-t \eta_t / 8}}{\frac{4 n \log(t /4)}{t \eta_t}} \le \sup_{t \ge 4 e} \frac{e^{-t \eta_t / 8}}{\frac{4 n}{t \eta_t} } = \frac{2}{n \cdot e}
\]
Subsequently, we have for $t \ge 4 e$ that:
\[
\mathbb{E}[\Delta_t] \le c \cdot \frac{n \log(t)}{t \eta_t} \ , \quad c = 4 \cdot \left(1 + 6/e \right)
\]
Now we take $\eta_t = t^{-1/2} \cdot (n \log(t))^{(1+\epsilon)/2}$ and for $t \ge 4 e$ we arrive at,
\[
\mathbb{E}[\eta_t] = t^{-1/2} \cdot (n \log(t))^{(1+\epsilon)/2}\]
\[\mathbb{E}[\Delta_t] \le  c \cdot \frac{n \log(t)}{t^{1/2} \cdot (n \log(t))^{(1+\epsilon)/2}} = c \cdot t^{-1/2} (n \log(t))^{(1-\epsilon)/2}
\]
We see from an integration that $\text{Regret}(H) = O(H^{1/2} \cdot \log(H)^{(1 + \epsilon) / 2})$ which establishes the result. Since $\Sigma_t \mathbb{E}[\Delta_t] = o(\Sigma_t \mathbb{E}[\eta_t])$ this regret bound is consistent with the assumption in Theorem \ref{main_result}. 

\end{proof}

\subsection{Proof of theorem \protect \ref{pairMech}}
\label{Example_Proof}

\pairMech* 

To get the result, we need to show linear regression converges and that it converges to expected utility. Here we'll establish that the expectation of the reports does in fact equal expected utility. Later we will establish convergence of the regression to expected utility. 

\meanPair*

\begin{proof}
    Given any measurable function $u : \Omega \to [0, \infty]$ from a sample space $\Omega$ to the nonnegative real numbers, there is a sequence of nonnegative simple functions $u_t$ increasing pointwise to $u$. This means we can construct a sequence of nonnegative simple random variables $u_t$, increasing to $u$. To proceed, we look at the range set $R_t$ of the $u_t$, which will be of finite cardinality. Moreover, each element in the range will be separated with some margin $\epsilon > 0$. Index these sets with $R_t(k)$ in order taking $R_t(0)$ to be a greatest lower bound. We have,

    \[
    \sum_{k = R_t(0)}^{|R_t|} P(u_t \ge R_t(k)) = \sum_{k = R_t(0)}^{|R_t|} \sum_{m = k}^{|R_t|} P(m + \epsilon > u_t \ge m)
    \]
    \[
    = \sum_{m = R_t(0)}^{|R_t|} \sum_{k = 1}^{m} P(m + \epsilon > u_t \ge m) 
    \]
    \[
    = \sum_{m = R_t(0)}^{|R_t|} m P(m+\epsilon > k \ge m)  = \sum_{m = R_t(0)}^{|R_t|} \int_{[m,m+\epsilon)} m d \nu
    \]
    \[
    = \sum_{m = R_t(0)}^{|R_t|} \int_{[m,m+\epsilon)} u_t d \nu = \int u_t d \nu = \mathbb{E}[u_t]
    \]
    
    Our construction also implies $\mathbb{P}(u_t \ge z)$ increases to $\mathbb{P}(u \ge z)$ monotonically with respect to the sequence. It is not hard to see that this sequence of functions is measurable. Therefore, by the monotone convergence theorem,

    \[
    \mathbb{E}[u] = \lim_{n \to \infty} \int_0^{\infty} \mathbb{P}(u_t \ge z) dz = \int_0^{\infty} \mathbb{P}(u \ge z) dz
    \]

    which concludes the proof. 
\end{proof}

For the main result. It is relatively well-known that well-specified linear regression converges in its predictions.

\begin{lemma}{(\citep{gyorfi2002distribution})}
Let $\hat \mu_t$ be an empirical least squares estimator for a linear valuation function $\mu$, which may depend on the input data $w_1, \ldots, w_t$. Denote the $L_2(\nu)$ norm with respect to the probability measure $\nu$ of the data as $\Vert \cdot \Vert$. Then,
\[
\mathbb{E}[\Vert \hat \mu_t - \mu \Vert^2  ] \le c \cdot \cfrac{(\log(t) + 1) \cdot d}{t}
\]
for some constant $c$. 
\end{lemma}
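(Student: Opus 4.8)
The plan is to recognize this as a special case of the general covering-number oracle inequality for bounded least squares regression (the setting of Gy\"orfi, Kohler, Krzy\.zak, and Walk), specialized to the $d$-dimensional linear class. Under the well-specified assumption, the regression function $\mu(w) = \langle \theta^*, \phi(w)\rangle$ lies in the class $\mathcal{F} = \{\, w \mapsto \langle \theta, \phi(w)\rangle \,\}$, truncated to $[0,1]$ since valuations are bounded, and the responses satisfy $y_k = \mu(w_k) + \varepsilon_k$ with independent mean-zero noise that is automatically sub-Gaussian because the reports are bounded. The goal is to control the population error $\mathbb{E}[\Vert \hat\mu_t - \mu\Vert^2]$ where $\Vert\cdot\Vert$ is the $L_2(\nu)$ norm.

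First I would write the \emph{basic inequality} for the empirical risk minimizer. Since $\hat\mu_t$ minimizes $\tfrac1t\sum_k (y_k - f(w_k))^2$ over $f\in\mathcal{F}$ and $\mu\in\mathcal{F}$, substituting $y_k = \mu(w_k)+\varepsilon_k$ and expanding the squares yields
\[
\Vert \hat\mu_t - \mu\Vert_n^2 \le \frac{2}{t}\sum_{k=1}^t \varepsilon_k\big(\hat\mu_t(w_k) - \mu(w_k)\big),
\]
where $\Vert\cdot\Vert_n$ is the empirical $L_2$ norm. The right-hand side is a noise-weighted empirical process indexed by the difference $\hat\mu_t - \mu$, which ranges over the $d$-dimensional linear span.

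Second I would bound this process uniformly over the class. The key structural fact is that the bounded linear class has sup-norm covering number $N(\epsilon)\le (C/\epsilon)^d$, so its metric entropy is $\log N(\epsilon) = O(d\log(1/\epsilon))$. Feeding this into a chaining / Dudley-entropy argument (equivalently, the peeling device of the cited reference) bounds the localized process at scale $\sqrt{d\log t/t}\cdot\Vert\hat\mu_t-\mu\Vert_n$ up to lower-order terms; combining with the basic inequality and solving the resulting self-bounding inequality gives $\mathbb{E}[\Vert\hat\mu_t-\mu\Vert_n^2] = O(d\log t/t)$. Finally, because $\mathcal{F}$ is uniformly bounded and finite-dimensional, a standard uniform-convergence step — driven by the same $O(d\log(1/\epsilon))$ entropy — transfers the empirical norm to the population norm, losing only a lower-order term and yielding $\mathbb{E}[\Vert\hat\mu_t-\mu\Vert^2]\le c\,(\log t+1)\,d/t$.

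The main obstacle is the second step: controlling the supremum of the noise-weighted empirical process over the infinite (though low-dimensional) class and reconciling the empirical and population norms in the random-design setting. This is precisely where the $\log t$ factor is paid and is the analytic heart of the cited theorem. The only problem-specific work is verifying that realizability holds for the linear model and computing the $O(d\log(1/\epsilon))$ metric entropy of the truncated linear class, both of which are routine, so I would state the result by invoking the general bound and supplying these two verifications.
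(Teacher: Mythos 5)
The paper does not prove this lemma at all: it is imported as a black-box citation to Györfi et al.\ (2002), where it is essentially the least-squares rate for finite-dimensional linear classes (Theorem 11.3 of that book) specialized to the realizable case. Your sketch correctly reconstructs the standard argument behind that cited result --- the basic inequality for the empirical risk minimizer, the $O(d\log(1/\epsilon))$ metric entropy of the truncated linear class fed into a chaining/peeling bound on the noise-weighted process, and the empirical-to-population norm transfer where the $\log t$ factor is paid --- so it is consistent with, and simply fills in, what the paper takes on citation.
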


Since $\nu$ is a probability measure, by the monotone property for $L_p(\nu)$ norms we have that,

\[\mathbb{E}[\Vert \hat \mu_t - \mu \Vert  ] \le \sqrt{\mathbb{E}[\Vert \hat \mu_t - \mu \Vert^2  ]} \le \sqrt{c \cdot \cfrac{(\log(t) + 1) \cdot d}{t}}\]

Recall that $\Delta_t$ is the expected maximum error of the learning algorithm over all the agents $i \in [n]$ with respect to exploration round data. A simple bound works for our purposes,

\[\Delta_t \le n \cdot \mathbb{E}[\Vert \hat \mu_t - \mu \Vert  ] = O \left( \sqrt{ \cfrac{\log(t)}{t}} \right)\]

so with $\eta_t = t^{-1/3} \cdot (n \log(t))^{(1+2 \epsilon)/3}$ we satisfy the necessary condition to invoke Theorem \ref{main_result}. 

\end{document}